\documentclass[11pt]{article}
\usepackage{amsmath, amsthm,amssymb,mathtools}  
\usepackage{fullpage}
\usepackage{authblk}
\usepackage{amssymb}
\usepackage{bold-extra}
\usepackage{xcolor}
\usepackage{titlesec}
\titlespacing*{\paragraph}{0pt}{1ex plus 1ex minus .2ex}{1em}
\usepackage{graphicx}
\usepackage{epstopdf}
\usepackage{seqsplit}
\usepackage{url}
\usepackage[normalem]{ulem}
\usepackage{enumerate}
\usepackage{hyperref}

%\declaretheorem[name=Theorem,numberwithin=section]{thm}

\DeclareGraphicsRule{.tif}{png}{.png}{`convert #1 `dirname #1`/`basename #1 .tif`.png}

\newtheorem{theorem}{Theorem}[section]
\newtheorem{lemma}[theorem]{Lemma} 
\newtheorem{proposition}[theorem]{Proposition} 
\newtheorem{corollary}[theorem]{Corollary}

\newcommand{\cE}{{\mathcal E}}
\newcommand{\cL}{{\mathcal L}}

\newcommand{\cS}{{\mathcal S}}
\newcommand{\cT}{{\mathcal T}}
\newcommand{\cN}{{\mathcal N}}

%Changing fonts to make it all fit:
%\renewcommand\Authfont{\scshape}

\parskip=1em
\sloppy

\title{On the Complexity of Optimising Variants of Phylogenetic Diversity on Phylogenetic Networks}
\author[1]{Magnus Bordewich}
\author[2]{Charles Semple}
\author[3]{Kristina Wicke}
\affil[1]{Department of Computer Science, Durham University, United Kingdom}
\affil[2]{School of Mathematics and Statistics, University of Canterbury, Christchurch, New Zealand}
\affil[3]{Department of Mathematics, The Ohio State University, Columbus, OH, USA}
\date{}                                           % Activate to display a given date or no date

\begin{document}
\maketitle

\begin{abstract}
Phylogenetic Diversity (PD) is a prominent quantitative measure of the biodiversity of a collection of present-day species (taxa). This measure is based on the evolutionary distance among the species in the collection. Loosely speaking, if $\cT$ is a {rooted} phylogenetic tree whose leaf set $X$ represents a set of species and whose edges have real-valued lengths (weights), then the PD score of a subset $S$ of $X$ is the sum of the weights of the edges of the minimal subtree of $\cT$ connecting the species in $S$. In this paper, we define several natural variants of the PD score for a subset of taxa which are related by a known {rooted} phylogenetic network. Under these variants, we explore, for a positive integer $k$, the computational complexity of determining the maximum PD score over all subsets of taxa of size $k$ when the input is restricted to different classes of {rooted} phylogenetic networks.
\end{abstract}

{\it Keywords:} Phylogenetic diversity, phylogenetic network, phylogenetic tree.

\section{Introduction}

Phylogenetic diversity (PD) is a popular measure for quantifying the biodiversity of a set of species based on their evolutionary history and relatedness. Roughly speaking, the PD score of a group of species (taxa) quantifies how much of the `tree of life' is spanned by the species in the group. Ever since its introduction by Daniel Faith in 1992~\cite{Faith1992}, this metric has attracted great attention in the literature both among empiricists and theorists. Indeed, Faith's {seminal} paper has been cited in excess of 2000~times.

In the face of limited resources in biodiversity conservation, a central problem in relation to phylogenetic diversity is to identify subsets of species that maximise the PD score. While there are efficient algorithms for finding maximum PD sets on a given tree \cite{Minh2006, Pardi2005, Steel2005}, variants of the problem (e.g., maximising PD across several trees \cite{bor09a, Spillner2008} or incorporating conservation costs and extinction probabilities in the so-called `Noah's ark problem' \cite{Weitzman1998}) have led to many interesting algorithmic questions. Most of this work to date has focused on measuring and maximising PD on phylogenetic trees. However, the metric has been extended to and analysed for so-called split networks \cite{bor12, Chernomor2016, min09, Moulton2007, Spillner2008} which are {typically} used to represent conflicts in data. More recently, the authors in~\cite{WICKE201880} have suggested approaches to measuring PD on explicit phylogenetic networks, which represent the evolutionary histories of collections of species whose past include reticulation (non-treelike) events such as hybridisation and horizontal gene transfer.

As processes such as hybridisation pose new challenges to biodiversity conservation (e.g.,~\cite{PachecoSierra2018,Quilodrn2020}) and diversity measures beyond PD on phylogenetic trees are needed, in this paper we present the first rigorous analysis of the computational complexity of optimising variants of phylogenetic diversity extended to rooted phylogenetic networks. These results could lead to algorithms that aid conservationists and policy makers in making more accurately informed decisions. We extend the work of \cite{WICKE201880} and define four natural variants of the PD score for a subset of taxa whose evolution is described by a rooted phylogenetic network $\cN$. We explore the relationships between these four measures and then analyse the computational complexity of, given $\cN$ and a positive integer $k$, determining the maximum PD score over all subsets of taxa of size $k$ under these variants of PD and for different classes of {rooted} phylogenetic networks.

The main results of this paper are as follows. We show that the complexity of determining the maximum AllPaths-PD score, our first variant of PD for rooted phylogenetic networks, depends on the class of networks to which $\cN$ belongs. In particular, for tree-child networks the optimisation problem is hard, whereas for level-1 networks the problem is polynomial (see Section~\ref{sec:AllPaths}). In Section~\ref{sec:network-PD} we introduce a second variant, Network-PD, which, in contrast to AllPaths-PD, takes into account the proportion of features a reticulation vertex inherits from each of its parents. We show that Network-PD is a generalisation of AllPaths-PD, and thus the corresponding optimisation problem is again computationally hard in general. In addition, we show that there is a direct correspondence between the maximum and minimum of Network-PD, and the third and fourth variants of PD for phylogenetic networks considered in this manuscript, namely MaxWeightTree-PD and MinWeightTree-PD. We end the paper by analysing the latter two more in-depth. More precisely, we show that the problem of determining the maximum value over all subsets of taxa of size $k$ is solvable in polynomial time for MaxWeightTree-PD (Section~\ref{sec:max}), whereas for MinWeightTree-PD even computing the MinWeightTree-PD score of a fixed subset of $X$ is computationally hard, and hence the optimisation problem is also hard (Section~\ref{sec:min}).

Before the main results, in Section~\ref{sec:notation} we give formal definitions of the structures and notation used throughout this manuscript. After reviewing the concept of PD on {rooted} phylogenetic trees, we then formally introduce our four variants of PD on {rooted} phylogenetic networks in Section~\ref{sec:PDvariants}. As described above, the remaining sections are devoted to analysing the complexity of determining the maximum PD score over all subsets of taxa of size $k$ under these four variants of PD and when the input is restricted to different classes of {rooted} phylogenetic networks.

\section{Notation and preliminaries}
\label{sec:notation}

To formally state our results, we need some notation and terminology. Throughout the paper, $X$ denotes a non-empty finite set (of taxa). 

\paragraph{Phylogenetic networks.}
A \emph{rooted binary phylogenetic network $\cN$ on $X$} is a rooted directed acyclic graph with no parallel arcs satisfying the following properties:
\begin{enumerate}[(i)]
    \item the (unique) root has in-degree zero and out-degree two;
    \item a vertex with out-degree zero has in-degree one, and the set of vertices with out-degree zero is $X$; {and}
    \item all other vertices have either in-degree one and out-degree two, or in-degree two and out-degree one.
\end{enumerate}
For technical reasons, if $|X|=1$, we allow $\cN$ to consist of the single vertex in $X$. The vertices in $X$ are \emph{leaves}. We call $X$ the {\it leaf set} of $\cN$ and frequently denote it by $\cL(\cN)$. The vertices with in-degree one and out-degree two are \emph{tree vertices}, while the vertices with in-degree two and out-degree one are \emph{reticulations}. We refer to arcs directed into a reticulation as \emph{reticulation arcs} and to all other arcs as \emph{tree arcs}. Furthermore, throughout the paper, we assume that all arcs of $\cN$ have non-negative real-valued lengths, that is, if $A$ denotes the arc set of $\cN$, then associated with $\cN$ is a mapping $w: A\rightarrow \mathbb R^{\ge 0}$ under which each arc $e$ of $\cN$ is assigned the weight $w(e)$. To illustrate, three rooted binary phylogenetic networks are shown in Fig.~\ref{Fig_NetworkTypes}. Here, as in all figures in the paper, arcs are directed down the page. A \emph{rooted binary phylogenetic $X$-tree} is a rooted binary phylogenetic network on $X$ with no reticulations. For the remainder of the paper, we will refer to rooted binary phylogenetic networks and rooted binary phylogenetic trees as phylogenetic networks and {phylogenetic} trees, respectively, as all such networks and trees considered are rooted and binary.

\paragraph{Tree-child and level-$1$ networks.}
A phylogenetic network $\cN$ on $X$ is a \emph{tree-child network} \cite{Cardona2009} if each non-leaf vertex is the parent of a tree vertex or a leaf. Equivalently, $\cN$ is tree-child if (i) no tree vertex is the parent of two reticulations and (ii) no reticulation is the parent of another reticulation~\cite{Semple2015}. And again, equivalently, $\cN$ is tree-child if, for every vertex $v$ of $\cN$, there is a path from $v$ to a leaf $\ell$ that consists only of tree vertices (except $\ell$ and possibly $v$ itself). We call such a path a \emph{tree path}.

Let $\cN$ be a phylogenetic network. A reticulation arc $(u,v)$ of $\cN$ is called a \emph{shortcut} if there is a directed path in $\cN$ from $u$ to $v$ that avoids $(u,v)$. We say that $\cN$ is a \emph{normal network} \cite{Willson2009} if it is tree-child and has no shortcuts. Finally, $\cN$ is a \emph{level-$1$ network} if its underlying (undirected) cycles are vertex disjoint. Normal and level-$1$ networks are {proper} subclasses of tree-child networks. An example of two tree-child networks, one normal and the other level-$1$, is shown in Fig.~\ref{Fig_NetworkTypes}.

\begin{figure}[htbp]
    \centering
    \includegraphics[scale=0.2]{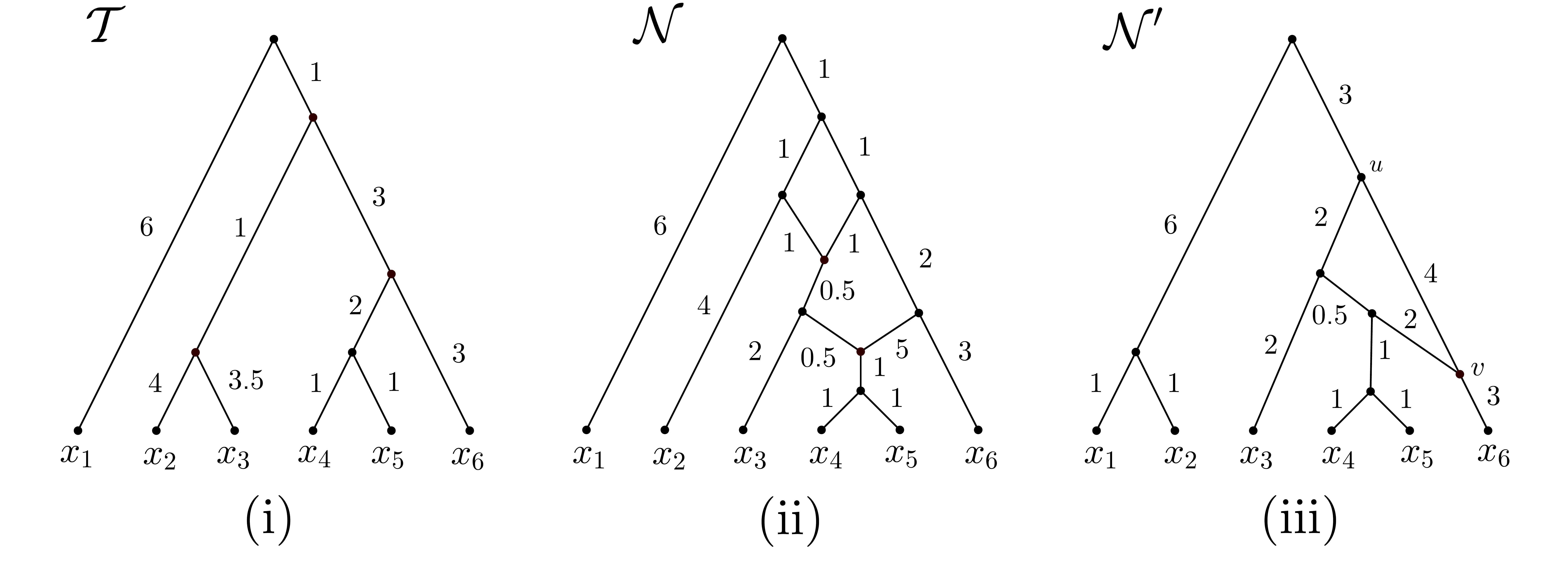}
    \caption{Phylogenetic tree $\cT$ and tree-child networks $\cN$ and $\cN'$ on $X=\{x_1, \ldots, x_6\}$ with non-negative {real-valued} arc weights. Note that $\cN$ is normal but not level-1, whereas $\cN'$ is level-1 but not normal (due to the shortcut $(u,v)$).}
    \label{Fig_NetworkTypes}
\end{figure}

\paragraph{Connecting subtrees.} Let $\cN$ be a phylogenetic network on $X$ with root $\rho$, and let $Y \subseteq X$. We call any subgraph  $T$ of $\cN$ that is a directed rooted tree (i.e. an arborescence) with root $\rho$ and leaf set $Y$, a \emph{connecting subtree for $Y$}. Note that $\rho$ may have out-degree one in $T$. Moreover, there might be several connecting subtrees for $Y$ in $\cN$. {W}e denote the set of all connecting subtrees for $Y$ in $\cN$ as $\mathsf{T}_Y(\cN)$. Also note that any connecting subtree $T$ for $Y$ is an edge-weighted tree, where each edge $e \in T$ inherits its weight $w(e)$ from $\cN$.

\section{Variants of phylogenetic diversity for phylogenetic networks} \label{sec:PDvariants}

In this section, we introduce our variants of PD for phylogenetic networks and then consider the associated optimisation problems.

\subsection{PD on {phylogenetic} trees and {phylogenetic} networks}

Before we define variants of PD for phylogenetic networks, we briefly review PD for {phylogenetic} trees. Phylogenetic diversity arose as a quantitative measure of the biodiversity of a set of species for use in conservation decisions~\cite{Faith1992}. PD has been studied for a variety for organisms, ranging from bacteria~\cite{Lozupone2007}, to plants~\cite{Cadotte2008}, to mammals~\cite{Safi2011}. Moreover, the \emph{International Union for Conservation of Nature} (IUCN) has established a `Phylogenetic Diversity Task Force' aiming at promoting the use of PD in conservation decisions (see \url{https://www.pdtf.org/}). PD also serves as a basis for so-called phylogenetic diversity indices such as the `Fair Proportion index'~\cite{Redding2003} (also called `evolutionary distinctiveness score'~\cite{Isaac2007}) {and} the `Equal Splits index'~\cite{Redding2003,Redding2006} that rank species for conservation, based on their contribution to overall PD. These indices are used in conservation initiatives such as the `EDGE of Existence programme' established by the Zoological Society of London~\cite{Isaac2007}.

The key underlying assumption in the use of PD as a quantitative measure is that if the arcs of a phylogenetic tree are weighted according to genetic distance, then features of interest (be that biological, pharmaceutical or conservational) will have arisen at a rate proportional to the lengths of the {arcs}. A further assumption is that all features that arose in an ancestral species have persisted to be present in the extant species descended from them. {So} the PD score is proportional to the number of distinct features present in a set of species. In particular, let $\cT$ be a phylogenetic $X$-tree (with non-negative real-valued {arc} lengths). The \emph{phylogenetic diversity} (PD score) of a subset $S \subseteq X$, denoted as $\textup{PD}_{\cT}(S)$, is the sum of {arc} lengths in the (unique) connecting subtree for $S$ in $\cT$.  Referring to Fig.~\ref{Fig_NetworkTypes}(i), if $S = \{x_2,x_4,x_6\}$, then $\textup{PD}_{\cT}(S) = 15$.

\paragraph{AllPaths-PD.}
There are different ways {that} the definition of PD may be extended from phylogenetic trees to phylogenetic networks, which we discuss now. The most straightforward approach is to again assume that all features that arise in any ancestral species persist to be present in all descendant extant species; then the natural extension of the PD score to networks is what we have called \emph{AllPaths-PD}. Specifically, for a phylogenetic network $\cN$ and a subset $S\subseteq X$, we define
$$\textrm{AllPaths-PD}_{\cN}(S) = \sum_{e\in \textrm{Anc}(S)}w(e),$$ 
where $\textrm{Anc}(S)$ is the set of all arcs that are ancestral to at least one taxon in $S$, i.e. lie on a directed path from the root of $\cN$ to some leaf in $S$.

\paragraph{Network-PD.}
To obtain a more accurate evaluation of the relative feature diversity of different subsets of taxa, we require knowledge of the proportion of features present in a parent species that are inherited by a child species. At a reticulation representing a true hybridisation event, the child taxon might inherit 50\% of the features of one parent and 50\% of the features of the other. Whereas, at a reticulation representing a lateral gene transfer, the child may inherit the entire genome of one parent species, i.e. 100\% of the features, and also receive an injection of a small section of DNA from the other parent, perhaps 5\% of the features. Thus, at each reticulation of our phylogenetic network, we must be given weights on each incoming arc corresponding to the proportion of features of the parent inherited by the child. Given this information, a more accurate measure, which we have called \emph{Network-PD}, may be obtained. 

On each incoming arc $e=(u,v)$ to each reticulation $v$ of a phylogenetic network $\cN$, let $p(e)$ be the \emph{inheritance proportion {(function})}, giving the proportion of features of the parent vertex $u$ that are present in the child vertex $v$ of that arc. We assume that for all reticulation arcs $p(e)\in[0,1]$. (Just as genetic distance is used in the {arc} lengths of a phylogenetic tree when computing PD as a proxy for the number of features of interest, we could use the proportion of the parental genome present in the child taxon as an estimate for the proportion of parental features inherited by the child.) For a subset $S$ of the leaves of $\cN$, {we define}, for each arc $e=(u,v)\in\cN$, {the function} $\gamma(S,e)$ to be the proportion of the features of $v$ that are present in the taxa set $S$. (Equivalently, $\gamma(S,e)$ is the probability that a feature arising on arc $e$ is inherited by some taxon in set $S$). We {now} define Network-PD as follows:
$$\textrm{Network-PD}_{\cN,\, p}(S) = \sum_{e\in \cN}\gamma(S,e)\cdot w(e).$$
Where the {phylogenetic} network $\cN$ or inheritance proportion function $p$ is obvious, we may omit them from the subscript. We may compute $\gamma(S,e)$ in a bottom up fashion as follows. For $e=(u,v)$, 
\begin{enumerate}[(i)]
    \item if $v$ is a leaf and $v\in S$ then $\gamma(S,e)=1$, whereas if $v\not\in S$ then $\gamma(S,e)=0$;
    \item if $v$ is a tree vertex with outgoing arcs $f_1$ {and} $f_2$, then $$\gamma(S,e)=\gamma(S,f_1)+\gamma(S,f_2)-\gamma(S,f_1)\gamma(S,f_2);$$
    \item if $v$ is a reticulation vertex with outgoing arc $f$, then $\gamma(S,e)=\gamma(S,f)p(e)$.
\end{enumerate}

\paragraph{MaxWeightTree-PD and MinWeightTree-PD.}
It is likely that, in practice, complete knowledge of the inheritance proportion function $p$ will not be possible, so we may be interested in upper and lower bounds on Network-PD under varying $p$. Note that if $p$ is allowed to vary without restriction it can still be no more than 1 on each arc, and that setting $p\equiv 1$ gives AllPaths-PD. The  inheritance proportion $p$ can also be no less than $0$ on any arc, and setting $p\equiv 0$ reduces Network-PD$_\cN$ to PD on a phylogenetic tree (specifically PD$_\cT$, where $\cT$ is, {up to isomorphism}, {the phylogenetic} tree obtained by deleting each reticulation arc of $\cN$ and connecting the reticulation vertices to the root by {arcs} of weight~$0$). These {extremities of $p$}
on Network-PD are thus not (mathematically) interesting in their own right. 

However, total inheritance proportions of 0 or 2 at a reticulation are unrealistic. Alternatively, we might assume that each feature inherited from the second parent replaces some feature from the first parent. That is to say, at a reticulation with incoming arcs $e_1$ and $e_2$ we require $p(e_1)+p(e_2)=1$. Under this assumption  upper and lower bounds for the total quantity of features present in a given subset of taxa are the PD scores of the maximum-weight and minimum-weight connecting subtrees for those taxa. Specifically, for a subset $S\subseteq X$, we define the following two variants of PD:
$$\textrm{MaxWeightTree-PD}_{\cN}(S) = \max_{T \in \mathsf{T}_S(\cN)} \sum_{e \in T}w(e)$$
and
$$\textrm{MinWeightTree-PD}_{\cN}(S) = \min_{T \in \mathsf{T}_S(\cN)} \sum_{e \in T}w(e).$$
We elaborate further how Network-PD is bounded by MinWeightTree-PD and MaxWeightTree-PD in Section~\ref{subsec:maxmin_network_pd}.

Note that AllPaths-PD is called `phylogenetic subnet diversity' in~\cite{WICKE201880}, MinWeightTree-PD is called `phylogenetic net diversity', and MaxWeightTree-PD is related to the notion of `embedded phylogenetic diversity' discussed therein. However, while the authors of~\cite{WICKE201880} introduced and compared different variants of PD for phylogenetic networks, they did not analyse the complexity of, given a phylogenetic network $\cN$ and a positive integer $k$, computing the maximum PD score over all subsets of taxa of size $k$, or finding a subset $S \subseteq X$ of cardinality $k$ which maximises the PD score under these variants. In the following, we consider the first problem, i.e. computing the maximum PD score over all subsets of taxa of size $k$ under the phylogenetic diversity variants introduced above on phylogenetic networks.

\subsection{Optimisation problems}

The problem of finding a subset of taxa of cardinality $k$ maximising the PD score has been extensively studied on phylogenetic trees. It corresponds to the task in conservation biology of determining which $k$ species maximise the biodiversity of the group~\cite{Faith1992}. Here, we focus on the problem of computing the maximum PD score over all subsets of taxa of size $k$ under the variants of PD defined above. More precisely, we study the following optimisation problems:

\noindent\textbf{\textsc{Max-AllPaths-PD}{$(\cN, k)$}}:\\ 
\textbf{Input:} A phylogenetic network $\cN$ on taxa set $X$ and a positive integer $k$.\\
\textbf{Objective:} Determine the maximum value of AllPaths-PD$_{\cN}(S)$ over all subsets $S\subseteq X$ of cardinality $k$.

\noindent\textbf{\textsc{Max-Network-PD}}{$(\cN, p, k)$}:\\ 
\textbf{Input:} A phylogenetic network $\cN$ on taxa set $X$, a inheritance proportion function $p$ on the reticulation arcs of $\cN$, and a positive integer $k$.\\
\textbf{Objective:} Determine the maximum value of Network-PD$_{\cN,\, p}(S)$ over all {subsets} $S\subseteq X$ of cardinality $k$.

\noindent\textbf{\textsc{Max-MaxWeightTree-PD}}{$(\cN, k)$}:\\ 
\textbf{Input:} A phylogenetic network $\cN$ on taxa set $X$ and a positive integer $k$.\\
\textbf{Objective:} Determine the maximum value of MaxWeightTree-PD$_{\cN}(S)$ over all subsets $S\subseteq X$ of cardinality $k$.

\noindent\textbf{\textsc{Max-MinWeightTree-PD}}{$(\cN, k)$}:\\ 
\textbf{Input:} A phylogenetic network $\cN$ on taxa set $X$ and a positive integer $k$.\\
\textbf{Objective:} Determine the maximum value of  MinWeightTree-PD$_{\cN}(S)$ over all subsets $S\subseteq X$ of cardinality $k$.

\noindent The complexity of these optimisation problems will be discussed in the sections that follow. 

\section{AllPaths-PD}
\label{sec:AllPaths}

We begin by studying AllPaths-PD.
Let $\cN$ be a phylogenetic network on $X$. Recall that, for any subset $S\subseteq X$, we defined AllPaths-PD$_{\cN}(S)$ to be the sum of the weights of all arcs of $\cN$ which lie on a path from the root of $\cN$ to a leaf in $S$, and \textsc{Max-AllPaths-PD} to be the problem of finding the maximum value of AllPaths-PD$_{\cN}(S)$ over all subsets $S\subseteq X$ of cardinality $k$.

In this section we show first that, in general, the problem \textsc{Max-AllPaths-PD} is NP-hard even when restricted to the class of normal networks. Moreover, we observe that
%under standard assumptions
\textsc{Max-AllPaths-PD} cannot be approximated within $1-\tfrac{1}{e}\approx 0.632$ {unless ${\rm P}={\rm NP}$}, and that a greedy algorithm will achieve this approximation ratio. Second, we show that \textsc{Max-AllPaths-PD} can be solved in polynomial time on level-1 networks.

\subsection{Maximising AllPaths-PD is NP-hard}

In order to show that Maximising AllPaths-PD is NP-hard we will use a reduction to the well-known Maximum Coverage problem:

\noindent\textbf{\textsc{Maximum Coverage}{$(\mathcal S, k)$}}:\\ 
\textbf{Input:} A collection $\mathcal{S}=\{S_1,S_2,...,S_n\}$ of sets and a positive integer $k$.\\
\textbf{Objective:} Find a subset $\mathcal{S'}\subseteq \mathcal{S}$ such that $|\mathcal{S'}|=k$ and the number of covered elements, $|\bigcup_{S_i\in\mathcal{S'}}S_i|$, is maximised.

\noindent \textsc{Maximum Coverage} is NP-hard to solve exactly. Indeed, the inapproximability of \textsc{Maximum Coverage} is well studied, and it is known that the approximation threshold is $1-\tfrac{1}{e}$. That is, unless ${\rm P}={\rm NP}$, no polynomial-time algorithm exists that {always} returns a solution to \textsc{Maximum Coverage} that is guaranteed to have value greater than $1-\tfrac{1}{e}$ of the optimal solution~\cite{fei98}.

\begin{theorem}\label{thm:np-hard}
The problem \textsc{Max-AllPaths-PD} is NP-hard. Moreover, \textsc{Max-AllPaths-PD} cannot be approximated in polynomial time with approximation ratio better than $1-\tfrac{1}{e}$ unless P$=$NP.
\end{theorem}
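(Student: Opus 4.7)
The plan is to give a polynomial-time, value-preserving reduction from \textsc{Maximum Coverage} to \textsc{Max-AllPaths-PD}. Since \textsc{Maximum Coverage} is NP-hard and, by Feige's theorem, cannot be approximated in polynomial time to within any factor better than $1 - \tfrac{1}{e}$ unless ${\rm P}={\rm NP}$, both conclusions of the theorem will follow simultaneously.

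Given an instance $(\mathcal{S} = \{S_1, \ldots, S_n\}, k)$ of \textsc{Maximum Coverage} over a universe $U = \{u_1, \ldots, u_m\}$, I build a phylogenetic network $\cN$ with leaf set $X = \{x_1, \ldots, x_n\}$, where $x_i$ encodes the choice of $S_i$, and I keep the same integer $k$. The guiding principle is: for each element $u_j \in U$, introduce a single arc $a_j$ of weight $1$, placed in $\cN$ so that
$$a_j \text{ is ancestral to } x_i \quad\Longleftrightarrow\quad u_j \in S_i,$$
while every other arc of $\cN$ receives weight $0$. Under this weighting, for any $Y \subseteq X$,
$$\textrm{AllPaths-PD}_{\cN}(Y) \;=\; \bigl|\{j : u_j \in S_i \text{ for some } x_i \in Y\}\bigr| \;=\; \Bigl|\bigcup_{x_i \in Y} S_i\Bigr|,$$
so maximising AllPaths-PD over $k$-subsets of $X$ is exactly the problem of choosing $k$ sets from $\mathcal{S}$ covering as many elements of $U$ as possible.

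The main obstacle is realising the required ancestor pattern inside a \emph{binary} phylogenetic network. I will assemble $\cN$ in three layers. First, a binary backbone of tree vertices descends from the root with all backbone arcs of weight $0$, providing $m$ hanging points; from each of these I drop the weight-$1$ arc $a_j$. Second, below the head of each $a_j$, I place a binary tree of weight-$0$ tree arcs with $|\{i : u_j \in S_i\}|$ output arcs, one destined for each $x_i$ with $u_j \in S_i$. Third, to merge all arcs aimed at $x_i$ into the in-degree-$1$ leaf $x_i$, I replace $x_i$ by a chain of reticulations with weight-$0$ arcs terminating at $x_i$, where each reticulation in the chain absorbs exactly one arc coming from a gadget for some $u_j \in S_i$. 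A routine check confirms the in-degree/out-degree conditions, the absence of parallel arcs, acyclicity (since every arc points from the backbone towards the leaves), and the ancestor equivalence above.

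Because the reduction is polynomial in $n+m$ and preserves the value of every $k$-subset exactly, any polynomial-time algorithm that approximates \textsc{Max-AllPaths-PD} within a factor strictly better than $1 - \tfrac{1}{e}$ would immediately yield the same guarantee for \textsc{Maximum Coverage}, contradicting Feige's theorem. Hence both NP-hardness and the claimed inapproximability threshold transfer. The layered construction is conceptually transparent; the only genuine subtlety is the bookkeeping needed to certify all local degree constraints of the binary network and (if one additionally wishes to strengthen the result to normal networks, as announced in the section overview) to insert auxiliary tree paths through each reticulation and route the two parents of each reticulation along distinct backbone branches so that the tree-child property holds and no reticulation arc is a shortcut.
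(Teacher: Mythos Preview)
Your proposal is correct and follows essentially the same approach as the paper: a value-preserving reduction from \textsc{Maximum Coverage} in which each universe element corresponds to a single weight-$1$ arc, each set corresponds to a leaf, the weight-$1$ arc for $u_j$ is made ancestral to leaf $x_i$ precisely when $u_j\in S_i$, and all remaining arcs carry weight~$0$. The paper's construction (caterpillar on the ground set with weight-$1$ pendant arcs, then fanning out from each element to the sets containing it, then refining to obtain a binary network) is structurally the same as your three-layer description; the only differences are cosmetic (e.g.\ how the high in-degree at each set-vertex is resolved, and handling the degenerate case where an element lies in a single set), and the paper likewise transfers both NP-hardness and Feige's $1-\tfrac{1}{e}$ inapproximability directly.
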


\begin{proof}
Let $\cS$, a collection of sets, and $k$, a positive integer, be an instance of \textsc{Maximum Coverage}. We begin by constructing a phylogenetic network $\cN$ with leaf set $\cS$ as follows. Set $E=\bigcup_{S\in\cS}$, that is, $E$ is the ground set of the \textsc{Maximum Coverage} instance. Take any phylogenetic tree (a caterpillar would do) with leaf set $E$, where each internal arc has weight~$0$ and each pendant arc (i.e. an arc incident with a leaf) has weight~$1$. The arcs of weight~$1$ are thus in one-to-one correspondence with {the elements in} $E$. Label each of these arcs with the same element of $E$ as its incident leaf. Now, for each $S\in \cS$, (i) add two new vertices $S$ {and} $\hat{S}$
to this construction {and} (ii), for each $e\in S$, add {a} new arc $(e, \hat S)$ and add {a} new arc $(\hat S, S)$.

This will result in each vertex labelled $e$, {where $e\in E$}, having out-degree corresponding to the number of sets containing $e$, and each vertex $\hat S$ having in-degree corresponding to the number of members of $S$. Next, refine every vertex that has either out-degree at least three or in-degree at least three, so that every resulting vertex has out-degree {equal to} two or in-degree {equal to} two, {respectively}. These new arcs below the arcs of $E$ are all assigned weight~$0$. Finally, suppress any vertices {of in-degree one and out-degree one} resulting from an element $e\in E$ that is only contained in a single member of $\cS$, and keep weight 1 and the label $e$ on the newly merged arc. The resulting phylogenetic network on $\cS$ is $\cN$ {and the construction takes time polynomial in the size of $\cS$}. See Fig.~\ref{fig:hardness} for an example of the construction.

\begin{figure}
\begin{center}
\includegraphics[width=\textwidth]{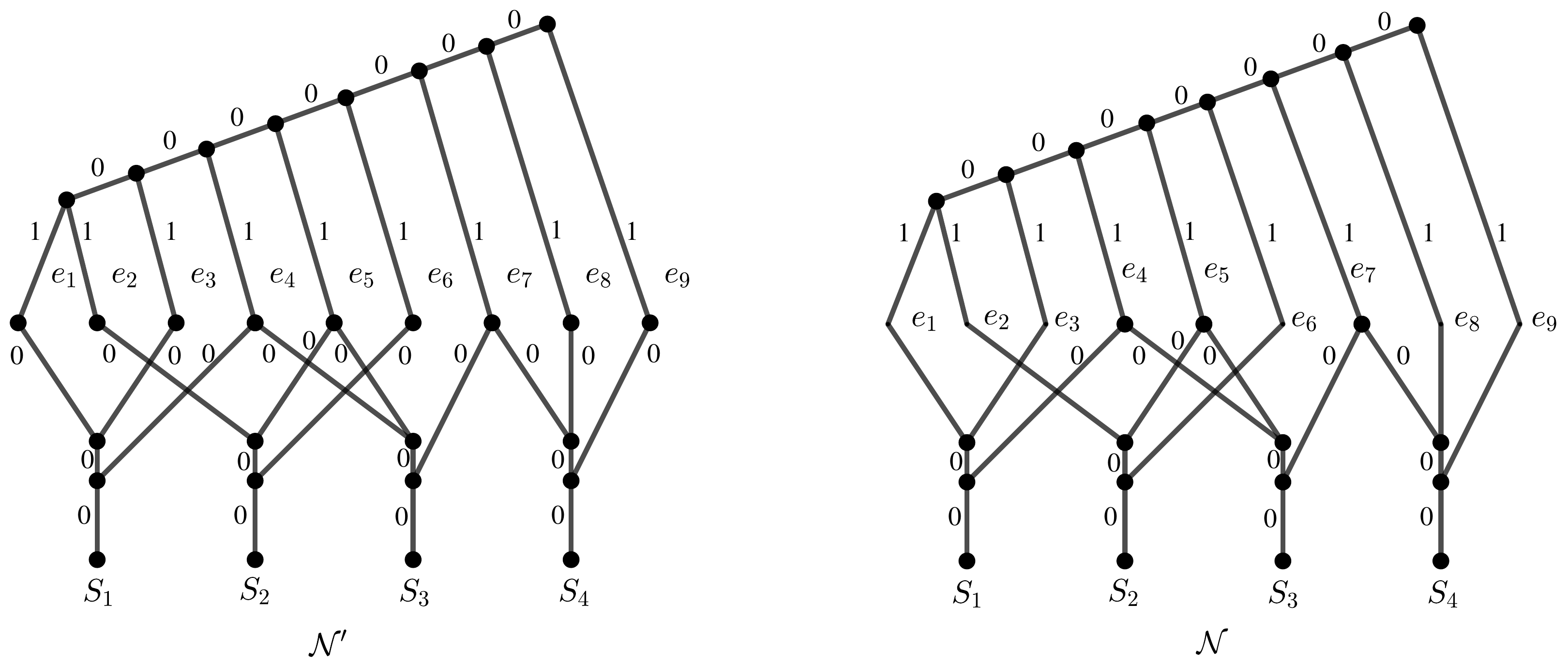}
\caption{The network $\cN$ resulting from reducing a \textsc{Maximum Coverage} instance $(\cS,k)$ to a \textsc{Max-AllPaths-PD} {instance} $(\cN, k)$. In this case, $\cS=\{S_1,S_2,S_3,S_4\}$, where $S_1=\{e_1,e_3,e_4\}, S_2=\{e_2,e_5,e_6\}, S_3=\{e_4,e_5,e_7\}$ and $S_4=\{e_7,e_8,e_9\}$, and so $E=\{e_1,e_2,\ldots,e_9\}$. On the left the network $\cN'$ illustrates the construction before the final step of suppressing {the} vertices {of in-degree one and out-degree one}.}
\label{fig:hardness} 
\end{center}
\end{figure}

Consider a subset $\cS'$ of the leaves of $\cN$. {An} arc $e\in E$ is on a path from the root {of $\cN$} to a member $S$ of $\cS'$ if and only if the set $S$ contains the corresponding element $e$. Thus the number of arcs in $E$ that lie on paths from the root to a member of $\cS$ is precisely $|\bigcup_{s\in\cS'}S|$. Since all the arcs of $\cN$ have weight~$0$ except those labeled with an element of $E$ which have weight~$1$, 
$$\textrm{AllPaths-PD}_{\cN}(\cS')=\left|\bigcup_{s\in\cS'}S\right|.$$
Thus solving \textsc{Max-AllPaths-PD} is at least as hard as solving \textsc{Maximum Coverage}, which is well known to be NP-hard~\cite{fei98}.

It is known that the approximation threshold {of} \textsc{Maximum Coverage} is $1-\tfrac{1}{e}$. Since we have equality in the optimal solutions to the two problems, if a polynomial-time algorithm that approximated \textsc{Max-AllPaths-PD} to a better ratio than $1-\tfrac{1}{e}$ existed, then using the reduction above we would be able to obtain a polynomial-time approximation for \textsc{Maximum Coverage} with the same ratio. Thus, unless P$=$NP, this is not possible. {This completes the proof of the theorem.}
\end{proof}

The hardness result of Theorem~\ref{thm:np-hard} can be extended to show that \textsc{Max-AllPaths-PD} remains NP-hard even when the input{ted} {phylogenetic} network is restricted to be from the class of normal networks.

\begin{theorem}\label{thm:normal}
The problem \textsc{Max-AllPaths-PD} is NP-hard even when the input{ted} {phylogenetic} network is restricted to be from the class of normal networks.
\end{theorem}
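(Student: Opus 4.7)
My plan is to modify the reduction in the proof of Theorem~\ref{thm:np-hard} so that the constructed phylogenetic network $\cN$ is normal. The original construction fails to be tree-child in two places: first, at each pre-refinement reticulation $\hat S$ of in-degree $|S|$, the binary refinement produces a sub-DAG in which a reticulation is the parent of another reticulation; and second, at each pre-refinement tree vertex $e$ of out-degree~$r$, the binary refinement may produce tree vertices with two reticulation children. I would replace each such binary refinement by a chain in which intermediate positions are occupied by tree vertices that each carry an extra dummy leaf via a weight-$0$ pendant arc, breaking up the forbidden parent-child pairs.

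Concretely, for each element $e$ contained in $r$ sets, I would refine the out-arcs of $e$ as a chain $e=v_0,v_1,\ldots,v_{r-1}$ of tree vertices where, for $i<r-1$, the children of $v_i$ are the $i$-th reticulation that $e$ feeds into and $v_{i+1}$, while the children of $v_{r-1}$ are the final such reticulation and a fresh dummy leaf $d^e$. For each set $S=\{e_1,\ldots,e_r\}$ I would build a chain $\hat S_1,\,t_1,\,\hat S_2,\,t_2,\,\ldots,\,t_{r-2},\,\hat S_{r-1}$ in which $\hat S_1$ has parents from the chains of $e_1$ and $e_2$, each subsequent $\hat S_j$ (with $j\geq 2$) has parents $t_{j-1}$ and a vertex of $e_{j+1}$'s chain, each $t_j$ is a tree vertex with children $\hat S_{j+1}$ and a fresh dummy leaf $d_j^S$, and $\hat S_{r-1}$ has the leaf $S$ as its unique child. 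All newly introduced arcs carry weight $0$, while the weight-$1$ element arcs remain in their original positions.

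With this construction, tree-child holds because every new tree vertex has a dummy-leaf or tree-vertex child and every new reticulation has a tree-vertex or leaf child; no shortcut arises because for each reticulation arc the alternate child of its tail is either a dummy leaf or lies on a chain segment whose descendants include only reticulations distinct from the arc's head. Correctness of the reduction would then follow from a swap argument: the set of weight-$1$ ancestral arcs of any dummy leaf is contained in that of some set leaf $S_d$, so---provided $|\cS|>k$, which we may assume since otherwise \textsc{Maximum Coverage} is trivial---any dummy leaf in an optimal subset can be swapped for an unchosen set leaf without decreasing the AllPaths-PD score, and the rest of the argument mirrors Theorem~\ref{thm:np-hard}. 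The main obstacle I anticipate is a clean verification of the no-shortcut property when a single element belongs to many sets (and so its chain interacts with many set-chains); a careful case analysis of each reticulation arc's alternate paths would be required.
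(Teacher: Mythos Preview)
Your approach is correct, and the shortcut verification you flag as the main obstacle does in fact go through: once a directed path enters one of your set-chains it can only proceed downward to dummy leaves or the set leaf (the tree vertices $t_j$ have a dummy leaf as one child and the next reticulation as the other), so no reticulation arc can be bypassed. However, your route is considerably more elaborate than the paper's. Rather than redesigning both the element-side and set-side refinements with alternating dummy leaves, the paper applies a single standard normalisation step to the network $\cN$ already built in Theorem~\ref{thm:np-hard}: subdivide \emph{every} reticulation arc with a new vertex and hang a new weight-$0$ leaf off each subdivision vertex. This one operation forces the result to be normal (each new subdivision vertex is a tree vertex with a leaf child, so tree-child holds; and for any reticulation arc the tail's only other child is a leaf, so no shortcut can exist), a fact the paper simply cites from the literature on normal networks. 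The paper additionally assigns weight~$1$ to the pendant arcs of the original $\cS$-leaves, which makes the swap argument immediate and yields the clean score $k+\left|\bigcup_{S\in\cS''}S\right|$ for any size-$k$ subset of original leaves. In short, both proofs use the same swap idea for correctness, but the paper avoids your bespoke chain constructions---and the case analysis you anticipate---by invoking the universal ``subdivide and attach a leaf'' trick.
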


\begin{proof}
The construction is the same as used in the proof of Theorem~\ref{thm:np-hard} with the following additions. Starting from the constructed phylogenetic network $\cN$ with leaf set $\cS$, we transform it into a normal network $\cN'$ by 
\begin{enumerate}[(i)]
    \item assigning weight~$1$ to the pendant arcs leading to the leaves of $\cN$ labeled by elements of $\cS$ (which we call \emph{original} leaves); and
    \item subdividing all reticulation arcs of $\cN$ with a new vertex and adjoining a new leaf via a new pendant arc with weight~$0$ to each new vertex (which we call a \emph{new leaf}).
\end{enumerate}
Note that the construction in (ii) means that $\cN'$ is normal~\cite{Willson2009}. {Furthermore, this construction takes time polynomial in the size of $\cN$.} Now observe that, for any subset $\cS'$ of the leaf set of the augmented phylogenetic network $\cN'$, if $S'$ contains a new leaf $\ell$, then we may find an original leaf $\ell'$ that is a descendent of $\ell$'s parent vertex. The set $(\cS'\cup \{\ell'\})-\{\ell\}$ will have an AllPaths-PD score at least as high as that for the set $\cS'$, since all arcs with weight~$1$ on a path from the root of $\cN'$ to $\ell$ are also on a path from the root of $\cN'$ to $\ell'$. Moreover, if a subset $\cS''\subseteq \cS$ contains $k$ original leaves, then the AllPaths-PD score is precisely
$$k+\left|\bigcup_{s\in\cS''}S\right|.$$
Thus optimal solutions to \textsc{Max-AllPaths-PD} still correspond to optimal subsets of $\cS$, and thus \textsc{Max-AllPaths-PD} remains NP-hard even in this restricted case.
\end{proof}

We next show that AllPaths-PD is a submodular function. From this it will follow that the following greedy algorithm will yield a guaranteed approximation ratio of $1-\frac{1}{e}$ for \textsc{Max-AllPaths-PD}: first select a taxon at maximum distance from the root of $\cN$, and then iteratively select a taxon that maximises the incremental increase in AllPaths-PD until the requisite number of taxon have been selected.

\begin{lemma}
\label{lem:submodular}
Let $\cN$ be a phylogenetic network on $X$. Then the function \textup{AllPaths-PD}, which assigns each subset $S \subseteq X$ a non-negative real value, is a submodular function, i.e. for all $A, B  \subseteq X$, we have
$$ \textup{AllPaths-PD}_{\cN}(A \cup B) + \textup{AllPaths-PD}_{\cN}(A \cap B) \leq \textup{AllPaths-PD}_{\cN}(A) + \textup{AllPaths-PD}_{\cN}(B).$$
\end{lemma}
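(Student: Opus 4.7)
The plan is to decompose $\textup{AllPaths-PD}_{\cN}$ into a nonnegative linear combination of simple set-coverage indicator functions, one per arc, and then invoke the standard fact that nonnegative linear combinations of submodular functions are submodular.

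First, I would rewrite the AllPaths-PD score by swapping the order of summation: for each arc $e$ of $\cN$, let $D_e \subseteq X$ denote the set of leaves reachable by a directed path from the head of $e$ (including the head itself, if it happens to be a leaf). Then $e$ lies on a directed path from the root of $\cN$ to some leaf in $S$ precisely when $D_e \cap S \neq \emptyset$, so
\[
\textup{AllPaths-PD}_{\cN}(S) \;=\; \sum_{e} w(e)\cdot \mathbf{1}\bigl[D_e \cap S \neq \emptyset\bigr],
\]
where the sum runs over all arcs $e$ of $\cN$.

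Next, for each fixed arc $e$ I would verify that the $\{0,1\}$-valued function $g_e(S) = \mathbf{1}[D_e \cap S \neq \emptyset]$ is submodular by a short case analysis on arbitrary $A,B \subseteq X$: if $D_e$ meets $A \cap B$, then all four values $g_e(A\cap B), g_e(A), g_e(B), g_e(A\cup B)$ equal $1$ and the inequality reads $2 \leq 2$; if $D_e$ meets $A \cup B$ but not $A \cap B$, then the left-hand side is $1$ while the right-hand side is at least $1$ since $D_e$ must intersect $A$ or $B$; and if $D_e$ misses $A \cup B$, then all four quantities vanish.

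Finally, since each arc weight $w(e)$ is nonnegative and submodularity is closed under nonnegative linear combinations, summing $w(e)\cdot g_e$ over all arcs of $\cN$ yields the required submodularity of $\textup{AllPaths-PD}_{\cN}$. There is no real obstacle: the whole argument is essentially the observation that AllPaths-PD is a weighted coverage function, and such functions are classically known to be submodular; the only thing to be careful about is correctly identifying $D_e$ so that the indicator rewriting is faithful to the original definition.
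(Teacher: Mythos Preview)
Your proposal is correct and essentially mirrors the paper's own proof: both rewrite $\textup{AllPaths-PD}_{\cN}(S)$ as $\sum_e w(e)\,\delta(S,e)$ for the per-arc indicator $\delta(S,e)=\mathbf{1}[e\in\textrm{Anc}(S)]$ (equivalently your $\mathbf{1}[D_e\cap S\neq\emptyset]$), verify the submodular inequality arc by arc, and then sum with the nonnegative weights $w(e)$. The only cosmetic difference is that the paper argues via $\textrm{Anc}(A\cup B)=\textrm{Anc}(A)\cup\textrm{Anc}(B)$ and $\textrm{Anc}(A\cap B)\subseteq\textrm{Anc}(A)\cap\textrm{Anc}(B)$ rather than your explicit case split, but these are the same computation.
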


\begin{proof}
Recall that, for {all} $S\subseteq X$, $\textrm{Anc}(S)$ is the {set} of arcs that are ancestral to {at least one taxon} in $S$. Thus 
\begin{align*}
    \textrm{AllPaths-PD}(S) &= \sum\limits_{e \in \cN} \delta(S,e) \cdot w(e), 
\end{align*}
where
\begin{align*}
    \delta(S,e) &= \begin{cases}
    1, &\text{if } e \in \textrm{Anc}(S); \\
    0, &\text{otherwise.}
    \end{cases}
\end{align*}
If $e\in \textrm{Anc}(A\cup B)$, then $e$ is on a path from the root of $\cN$ to a leaf in $A\cup B$, that is, {$e$ is} on {a} path from the root of $\cN$ to a leaf in either $A$ or $B$. Thus
$$\textrm{Anc}(A \cup B) = \textrm{Anc}(A) \cup \textrm{Anc}(B)$$
and, similarly,
$$\textrm{Anc}(A \cap B) \subseteq \textrm{Anc}(A) \cap \textrm{Anc}(B).$$
It follows that, for all {arcs} $e$ {in} $\cN$ and all $A,B \subseteq X$, 
\begin{align*}
    \delta(A \cup B, e) + \delta(A \cap B, e) - \delta(A,e) - \delta(B,e) \leq 0.
\end{align*}
Since $$\textrm{AllPaths-PD}(A \cup B) + \textrm{AllPaths-PD}(A \cap B) - \textrm{AllPaths-PD}(A) - \textrm{AllPaths-PD}(B)$$ is a weighted sum of the corresponding $\delta$ quantities above with non-negative weights, the statement now follows.
\end{proof}

By Lemma~\ref{lem:submodular}, AllPaths-PD is a submodular function. It is also non-decreasing (adding a taxon will never decrease the AllPaths-PD score). Hence standard approaches to constructing greedy algorithms for cardinality constrained submodular functions~\cite{nem81} yield an approximation algorithm, and so we have the following immediate corollary. In particular, the greedy algorithm described prior to Lemma~\ref{lem:submodular} will give this approximation. Note that, by Theorem~\ref{thm:np-hard}, this approximation ratio cannot be improved unless ${\rm P}\neq {\rm NP}$.

\begin{corollary}
The greedy algorithm prior to Lemma~\ref{lem:submodular} returns a $1-\tfrac{1}{e}$ approximation for \textsc{Max-AllPaths-PD}. 
\end{corollary}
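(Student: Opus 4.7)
The plan is to derive the corollary directly from the classical result of Nemhauser, Wolsey and Fisher~\cite{nem81}, which states that for any non-negative, non-decreasing, submodular set function $f$ with $f(\emptyset)=0$, the standard greedy algorithm---start from $\emptyset$ and at each step adjoin an element maximising the marginal gain---returns a $(1-\tfrac{1}{e})$-approximation to the problem of maximising $f$ subject to a cardinality constraint $|S|\le k$. Since all the heavy lifting for \textsc{Max-AllPaths-PD} has already been done in Lemma~\ref{lem:submodular}, the proof reduces to a short verification that AllPaths-PD satisfies the hypotheses of that theorem and that the algorithm described just before Lemma~\ref{lem:submodular} is literally an instance of the standard greedy procedure.

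First, I would check the three required properties of $f(S)=\textrm{AllPaths-PD}_{\cN}(S)$. Non-negativity is immediate from the assumption that arc weights are non-negative. The identity $f(\emptyset)=0$ holds because $\textrm{Anc}(\emptyset)=\emptyset$: no arc lies on a root-to-leaf path when the set of leaves is empty. Monotonicity follows from the obvious inclusion $\textrm{Anc}(A)\subseteq\textrm{Anc}(B)$ whenever $A\subseteq B$, so $f(A)\le f(B)$. Submodularity is exactly Lemma~\ref{lem:submodular}.

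Next, I would reconcile the opening step of the algorithm---``select a taxon at maximum distance from the root of $\cN$''---with the Nemhauser--Wolsey--Fisher greedy step, which requires picking $x$ maximising $f(\{x\})-f(\emptyset)=f(\{x\})$. For a single taxon $x$, the set $\textrm{Anc}(\{x\})$ is precisely the set of arcs lying on some directed path from the root to $x$, so $f(\{x\})=\sum_{e\in\textrm{Anc}(\{x\})}w(e)$. Interpreting ``distance from the root'' as the total weight of ancestral arcs of $x$ (the natural generalisation of the tree case, and the relevant quantity here), the first step of the stated algorithm coincides with the greedy choice. Every subsequent step is, by definition, the greedy choice for AllPaths-PD. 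Invoking~\cite{nem81} then yields the claimed $(1-\tfrac{1}{e})$-approximation, and Theorem~\ref{thm:np-hard} shows this ratio is optimal unless ${\rm P}={\rm NP}$.

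There is essentially no obstacle here; the only subtlety is the cosmetic one of matching the informal phrase ``maximum distance from the root'' with the formal marginal-gain criterion, which I would handle by a single explanatory sentence. No estimates, gap-fillings, or auxiliary constructions are needed beyond the direct appeal to~\cite{nem81}.
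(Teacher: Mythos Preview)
Your proposal is correct and follows essentially the same route as the paper: the paper also derives the corollary immediately from Lemma~\ref{lem:submodular} together with monotonicity of AllPaths-PD and the Nemhauser--Wolsey result~\cite{nem81}, noting via Theorem~\ref{thm:np-hard} that the ratio is tight. Your only addition is the explicit reconciliation of the phrase ``maximum distance from the root'' with the greedy marginal-gain criterion, which the paper leaves implicit.
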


\subsection{Maximising AllPaths-PD on level-1 networks}

Although {\sc Max-AllPaths-PD} is NP-hard in general, in this section we show that it is polynomial time for the class of level-$1$ networks. Let $\cN$ be a level-$1$ network on $X$. We begin by determining two connecting subtrees $T_1$ and $T_2$ for $X$ in $\cN$ that together cover all arcs of $\cN$.
Let $v$ be a reticulation of $\cN$. Since $\cN$ is level-$1$, it is easily seen that there is a unique tree vertex, $s$ say, of $\cN$ such that there exists distinct (directed) paths $P$ and $P'$ starting at $s$ and ending at $v$ such that $s$ and $v$ are the only vertices of $P$ and $P'$ in common. We refer to $s$ as the {\em source} vertex of $v$. Now, let $v_1, v_2, \ldots, v_k$ denote the reticulations of $\cN$. For each $i\in \{1, 2, \ldots, k\}$, let $s_i$ denote the source vertex of $v_i$. Furthermore, for each $i\in \{1, 2, \ldots, k\}$, let $u_i$ and $u'_i$ denote the (distinct) parents of $v_i$. Note that at most one of $u_i$ and $u'_i$ is $s_i$. 
Let $T_1$ be the connecting subtree for $X$ in $\cN$ obtained from $\cN$ by deleting $(u'_i, v_i)$ for all $i$, and re-weighting each of the arcs on the (unique) path from $s_i$ to $u'_i$ as zero for all $i$. All other arcs of $T_1$ keep the weighting inherited from $\cN$. 
Similarly, let $T_2$ be the connecting subtree for $X$ in $\cN$ obtained from $\cN$ by deleting $(u_i, v_i)$ for all $i$, and re-weighting each arc not on the (unique) path from $s_i$ to $v_i$ via $u'_i$ as zero for all $i$.
We call $(T_1, T_2)$ a {\em weighted covering} of $\cN$. {To illustrate this construction}, an example is given in Fig.~\ref{Fig:Level1}, {where $\cN$ is a level-$1$ network, and $(T_1, T_2)$ is a weighted covering of $\cN$}.

\begin{figure}[htbp]
    \centering
    \includegraphics[scale=0.25]{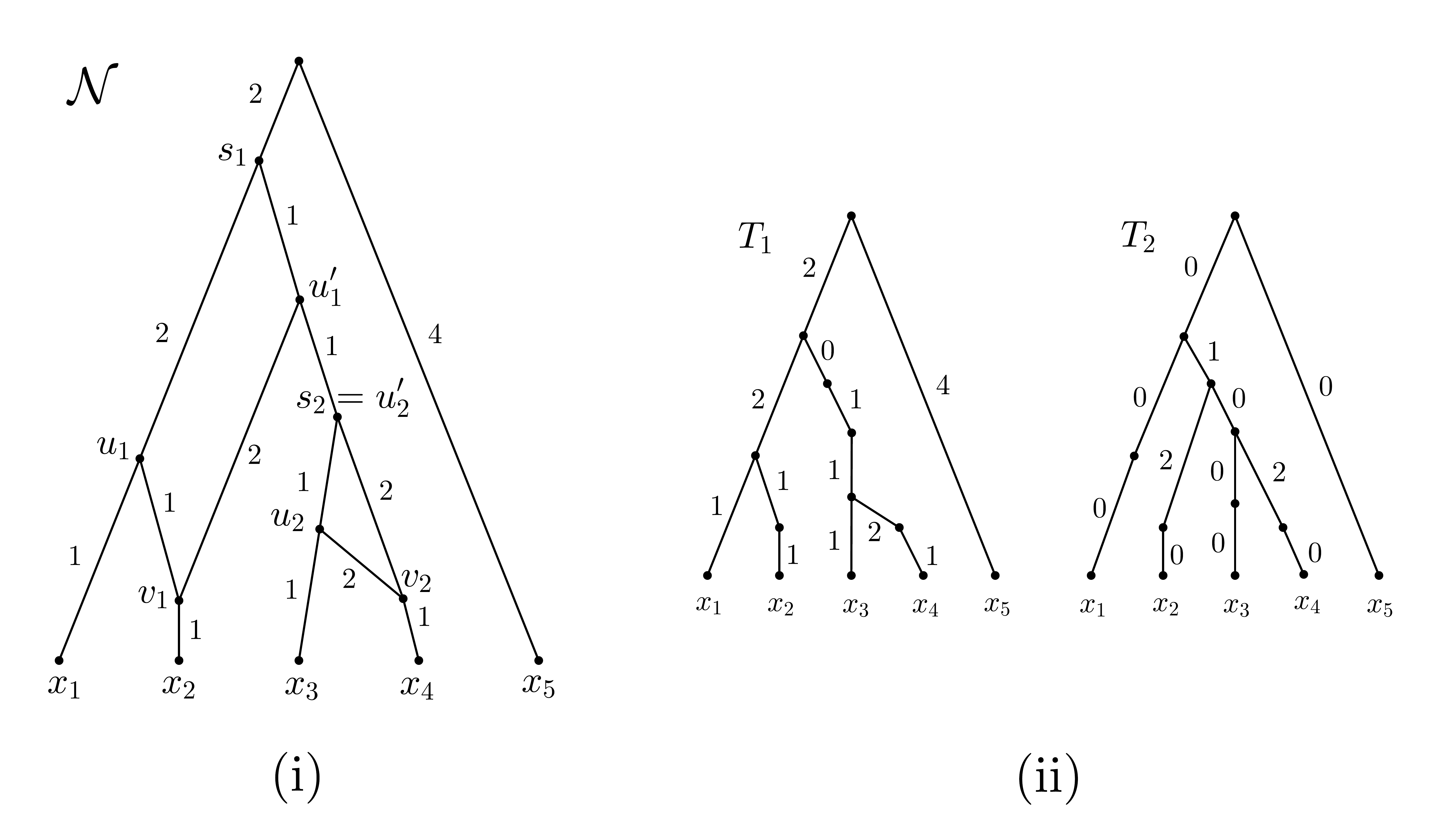}
    \caption{(i) A level-$1$ network $\cN$ on $X=\{x_1, x_2, \ldots, x_5\}$, and (ii) a weighted covering {$(T_1, T_2)$} of $\cN$.}
    \label{Fig:Level1}
\end{figure}

The following proposition is sufficient to show that {\sc Max-AllPaths PD} is polynomial time for the class of level-$1$ networks. The reason for this sufficiency is given after its proof.

\begin{proposition}
Let $\cN$ be a level-$1$ network on $X$, and let $(T_1,T_2)$ be a weighted covering of $\cN$. If $S\subseteq X$, then AllPaths-$PD_{\cN}(S)$ equates to the sum of $PD_{T_1}(S)$ and $PD_{T_2}(S)$.
\label{weightsum}
\end{proposition}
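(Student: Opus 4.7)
The plan is to argue arc by arc. For each arc $e$ of $\cN$ I will establish that its contribution to $\textrm{AllPaths-PD}_{\cN}(S)$ equals the sum of its (possibly reweighted) contributions to $\textrm{PD}_{T_1}(S)$ and $\textrm{PD}_{T_2}(S)$; summing over all arcs of $\cN$ then yields the proposition. Let $w_{T_i}(e)$ denote the weight of $e$ in $T_i$, with $w_{T_i}(e)=0$ if $e\notin T_i$, and let $\delta_{\cN}(S,e)$ and $\delta_{T_i}(S,e)$ be the indicators that $e$ lies on a directed path from the root to some leaf of $S$, in $\cN$ and in $T_i$ respectively. The per-arc identity I need is
$$w(e)\,\delta_{\cN}(S,e)\;=\;w_{T_1}(e)\,\delta_{T_1}(S,e)+w_{T_2}(e)\,\delta_{T_2}(S,e).$$

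The first ingredient is a weight-conservation identity $w_{T_1}(e)+w_{T_2}(e)=w(e)$, which I verify by a short case analysis on the location of $e$ in $\cN$. An arc lying on no cycle keeps its full weight in $T_1$ and is zeroed in $T_2$. For each cycle around a reticulation $v_i$, every arc on the $s_i$-to-$u_i$ side (including $(u_i,v_i)$) keeps its full weight in $T_1$ and is either absent from $T_2$ or zeroed there, while every arc on the $s_i$-to-$u'_i$ side (including $(u'_i,v_i)$) behaves symmetrically under the swap of $T_1$ and $T_2$. Because in a level-$1$ network cycles are vertex-disjoint, these cases are mutually exclusive and exhaustive, so the identity is immediate.

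The more delicate ingredient, and what I expect to be the main obstacle, is to show that $\delta_{T_i}(S,e)=\delta_{\cN}(S,e)$ whenever $w_{T_i}(e)>0$. Since $T_i$ is a subgraph of $\cN$, only one direction needs work: a positive-weight arc in $T_i$ must not lose any descendant leaves when passing from $\cN$ to $T_i$. My plan is to characterise the ``bad'' vertices of $\cN$, namely those whose reachable-leaf set strictly shrinks in $T_1$ (and symmetrically for $T_2$). The claim is that the bad vertices for $T_1$ are exactly the interior vertices of the $s_j$-to-$u'_j$ portion of some cycle together with $u'_j$ itself. The structural fact that makes this go through is that, in a level-$1$ network, a directed path from outside a cycle can enter the cycle only at its source $s_j$, since any other entry point would give some cycle vertex in-degree at least two, forcing a second reticulation and hence a second cycle sharing a vertex. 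Consequently, if a non-bad vertex $w$ reaches $u'_j$ in $\cN$ then it must in fact reach $s_j$, so in $T_1$ it can reroute through the intact $s_j$-to-$u_j$-to-$v_j$ path. Iterating this rerouting over the deleted arcs encountered on a root-to-leaf path shows that every leaf reachable from a non-bad vertex in $\cN$ is reachable from it in $T_1$. By construction, the head of every positive-weight arc of $T_1$ is non-bad, and the symmetric argument handles $T_2$.

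Combining the two ingredients, the per-arc identity holds for every arc $e$ of $\cN$: when $\delta_{\cN}(S,e)=0$ both sides vanish, and otherwise the right-hand side equals $(w_{T_1}(e)+w_{T_2}(e))\,\delta_{\cN}(S,e)=w(e)\,\delta_{\cN}(S,e)$. Summing over all arcs of $\cN$ yields $\textrm{PD}_{T_1}(S)+\textrm{PD}_{T_2}(S)=\textrm{AllPaths-PD}_{\cN}(S)$. The argument is largely mechanical once the bad-vertex characterisation is in hand, and the level-$1$ hypothesis enters exactly where one expects, namely in pinning down where alternative root-to-leaf paths in $\cN$ can be found inside $T_1$ and $T_2$.
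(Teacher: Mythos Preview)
Your argument is correct and follows essentially the same arc-by-arc accounting as the paper: you establish the same weight-partition property (each nonzero-weight arc of $\cN$ contributes its full weight to exactly one of $T_1,T_2$) and the same ancestry-preservation claim (a positive-weight arc of $T_i$ lies on a root-to-$\ell$ path in $T_i$ iff it does in $\cN$). The paper simply asserts the ancestry-preservation step ``by construction'' without justification, whereas you supply the missing detail via the bad-vertex characterisation and the observation that level-$1$ cycles can only be entered at their source; so your proof is a fleshed-out version of the paper's rather than a different approach.
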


\begin{proof}
Let $A^{>0}({T_1})$ and $A^{>0}({T_2})$ denote the arcs of ${T_1}$ and ${T_2}$, respectively, of non-zero weight. For each $i\in \{1, 2\}$, let $\phi_i$ denote the identity map from $A^{>0}(T_i)$ to the arc set of $\cN$. By construction, for each $i$, the map $\phi_i$ is one-to-one and, for each non-zero weighted arc $e$ of $\cN$, the arc $e$ is in the co-domain of exactly one of $\phi_1$ and $\phi_2$. Now suppose that $S\subseteq X$, and let $\ell\in S$ and let $e$ be a non-zero weighted arc of $\cN$. Then $e$ is on a directed path from the root of $\cN$ to $\ell$ if and only if there is a unique $i$ such that $\phi^{-1}_i(e)$ has non-zero weight and $\phi^{-1}_i(e)$ is on the (unique) path in ${T_i}$ from its root to $\ell$. It now follows that
$$\mbox{AllPaths-PD}_{\cN}(S)={\rm PD}_{{T_1}}(S)+{\rm PD}_{{T_2}}(S).$$
\end{proof}

{As an example of Proposition~\ref{weightsum}, consider Fig.~\ref{Fig:Level1} and choose $S=\{x_2, x_4\}$. Then} AllPaths-PD$_{\cN}(S)=16$, PD$_{{T_1}}(S)=11$, and PD$_{{T_2}}(S)=5$. In particular,
$$\textrm{AllPaths-PD}_{\cN}(S)= {\rm PD}_{{T_1}}(S) + {\rm PD}_{{T_2}}(S).$$

Let $\cN$ be a level-$1$ network. It is clear that a weighted covering {$(T_1, T_2)$} of $\cN$ can be constructed in time polynomial in the size of {$\cN$}. {However, as $\cN$ is tree-child, the number of vertices, and thus arcs, in $\cN$ is linear in the size of $X$~\cite{Cardona2009, mcd15}, and so this construction can be done in time polynomial in the size of $X$.} By Proposition~\ref{weightsum}, finding the maximum value of AllPaths-PD$_{\cN}(S)$ over all subsets $S$ of $X$ of size $k$ is equivalent to finding the maximum value of ${\rm PD}_{{T_1}}(S)+{\rm PD}_{{T_2}}(S)$ over all subsets $S$ of $X$ of size $k$. The latter problem, called \textsc{Weighted Average PD on $2$ Trees}, is shown to be solvable in {time} polynomial {in the size of $X$} in~\cite{bor09a} {by reformulating the problem as a set of $|X|$ minimum-cost flow problems}. It follows that {\sc Max-AllPaths-PD} for the class of level-$1$ networks is also polynomial time {in the size of $X$}. {In particular, we have the following corollary.} 

\begin{corollary}
{Let $\cN$ be a level-$1$ network on $X$, and let $k$ be a positive integer. Then {\sc Max-AllPaths-PD}$(\cN, k)$ can be solved in time polynomial in the size of $X$.}
\end{corollary}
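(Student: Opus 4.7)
The plan is to assemble the corollary directly from the proposition and the prior comments about weighted coverings and the \textsc{Weighted Average PD on 2 Trees} result of~\cite{bor09a}. First I would construct a weighted covering $(T_1,T_2)$ of $\cN$: for each reticulation $v_i$ identify the source vertex $s_i$ and the parents $u_i,u'_i$, and form $T_1$ and $T_2$ by the deletion and zero-reweighting prescription given before Proposition~\ref{weightsum}. This construction is a straightforward traversal of $\cN$. Since $\cN$ is level-$1$, it is in particular tree-child, so by the cited bound from~\cite{Cardona2009, mcd15} the number of arcs of $\cN$ is linear in $|X|$, and hence $(T_1,T_2)$ is produced in time polynomial in $|X|$.

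Second, I would invoke Proposition~\ref{weightsum} to rewrite the objective: for every $S\subseteq X$,
$$\textrm{AllPaths-PD}_{\cN}(S) = \textrm{PD}_{T_1}(S)+\textrm{PD}_{T_2}(S).$$
Consequently, maximising the left-hand side over all $S\subseteq X$ with $|S|=k$ is precisely the problem of maximising $\textrm{PD}_{T_1}(S)+\textrm{PD}_{T_2}(S)$ over the same family of subsets, which is exactly \textsc{Weighted Average PD on $2$ Trees} applied to the pair $(T_1,T_2)$ (with uniform weights on the two trees).

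Third, I would quote the result of~\cite{bor09a}, which shows that \textsc{Weighted Average PD on $2$ Trees} is solvable in time polynomial in the total size of the two input trees by reducing it to $|X|$ minimum-cost flow instances. Since $T_1$ and $T_2$ each have size linear in $|X|$, the overall running time is polynomial in $|X|$.

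Combining these three steps establishes the corollary. I do not anticipate a main obstacle here, as every ingredient has already been set up: the only delicate point is the justification that the level-$1$ input has size polynomial in $|X|$ (so that the blackbox running times in $|\cN|$ convert to running times in $|X|$), and this is covered by the tree-child size bound cited above.
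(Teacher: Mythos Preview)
Your proposal is correct and follows exactly the same route as the paper: construct a weighted covering $(T_1,T_2)$ in time polynomial in $|X|$ (using the tree-child size bound), apply Proposition~\ref{weightsum} to rewrite {\sc Max-AllPaths-PD} as maximising $\textrm{PD}_{T_1}(S)+\textrm{PD}_{T_2}(S)$, and then invoke the polynomial-time algorithm of~\cite{bor09a} for \textsc{Weighted Average PD on $2$ Trees}.
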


\section{Network-PD}
\label{sec:network-PD}

In this section, we turn to Network-PD, our variant of PD for phylogenetic networks that {is potentially} more realistic than AllPaths-PD discussed previously, but that requires additional information in the form of inheritance proportions on each reticulation arc. 
Let $\cN$ be a phylogenetic network on $X$ with an additional weight $p(e)$, the inheritance proportion, on each incoming arc to each reticulation. {This additional weight indicates} the proportion of features of the parent vertex present in the child vertex of that arc. Recall that, for any subset $S \subseteq X$, we defined Network-PD$_{\cN,\, p}(S)$ as 
$$\textrm{Network-PD}_{\cN,\, p}(S) = \sum_{e\in \cN}\gamma(S,e)\cdot w(e),$$
where, for each arc $e=(u,v) \in \cN$, the coefficient $\gamma(S,e)$ denotes the proportion of the features of $v$ that are present in the taxa set $S$. Thus, while AllPaths-PD implicitly assumes that each reticulation inherits all features present in both its parents, Network-PD allows us to model the fact that a reticulation representing, {for example}, a true hybridisation event might inherit only a proportion of features from each of its two parents.

In the following corollary of Theorems~\ref{thm:np-hard} and~\ref{thm:normal}, we observe that Network-PD is a generalisation of AllPaths-PD, from which it follows that maximising Network-PD is NP-hard. 

\begin{corollary}
The problem \textsc{Max-Network-PD} is NP-hard even when the inputted phylogenetic network is restricted to be from the class of normal networks. Moreover, \textsc{Max-Network-PD} cannot be approximated in polynomial time with approximation ratio better than $1-\tfrac{1}{e}$ unless ${\rm P}={\rm NP}$.
\end{corollary}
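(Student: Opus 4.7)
The plan is to show that AllPaths-PD is literally the special case of Network-PD obtained by setting the inheritance proportion to $p\equiv 1$ on every reticulation arc. Since the only constraint imposed on $p$ in the problem statement is $p(e)\in[0,1]$ for each reticulation arc, $p\equiv 1$ is a valid input, and so the identity map on phylogenetic networks, augmented by $p\equiv 1$, is a polynomial-time reduction from \textsc{Max-AllPaths-PD}$(\cN,k)$ to \textsc{Max-Network-PD}$(\cN,p,k)$. Normality of $\cN$ is of course preserved (the reduction does not touch the network), so both hardness claims will then follow from Theorems~\ref{thm:np-hard} and~\ref{thm:normal} provided the objective values agree exactly.

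The core step is therefore to verify that, when $p\equiv 1$, the coefficient $\gamma(S,e)$ equals the indicator $\delta(S,e)=[e\in\textrm{Anc}(S)]$ for every arc $e$ of $\cN$ and every $S\subseteq X$. I would prove this by induction on the bottom-up order used to define $\gamma$. The base case (leaves) is immediate from clause~(i). For a tree vertex with children $f_1,f_2$, if both coefficients lie in $\{0,1\}$ then $\gamma(S,f_1)+\gamma(S,f_2)-\gamma(S,f_1)\gamma(S,f_2) = \max\{\gamma(S,f_1),\gamma(S,f_2)\}$, which is $1$ iff some descendant arc reaches $S$, iff $e\in\textrm{Anc}(S)$. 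For a reticulation with outgoing arc $f$, clause~(iii) gives $\gamma(S,e)=p(e)\gamma(S,f)=\gamma(S,f)$; and since every internal vertex of a phylogenetic network is reachable from the root, $f\in\textrm{Anc}(S)$ iff $e\in\textrm{Anc}(S)$. Consequently
$$\textrm{Network-PD}_{\cN,\, p\equiv 1}(S)=\sum_{e\in\cN}\delta(S,e)\cdot w(e)=\textrm{AllPaths-PD}_{\cN}(S)$$
for every $S\subseteq X$.

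Given this exact equality of objective values, NP-hardness on normal networks is inherited directly from Theorem~\ref{thm:normal}. For inapproximability, any polynomial-time algorithm returning a solution of value at least $(1-\tfrac{1}{e}+\varepsilon)$ times the optimum for \textsc{Max-Network-PD} would, composed with the reduction, yield an approximation of the same ratio for \textsc{Max-AllPaths-PD}, contradicting Theorem~\ref{thm:np-hard} unless $\mathrm{P}=\mathrm{NP}$. The only real work is the $\gamma$-identity above; everything else is bookkeeping, so I do not anticipate a genuine obstacle.
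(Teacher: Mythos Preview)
Your proposal is correct and follows essentially the same approach as the paper: set $p\equiv 1$ so that $\textrm{Network-PD}_{\cN,\,p}(S)=\textrm{AllPaths-PD}_{\cN}(S)$ for every $S$, and then invoke Theorems~\ref{thm:np-hard} and~\ref{thm:normal}. The only difference is that the paper simply asserts this equality, whereas you spell out the inductive verification that $\gamma(S,e)=\delta(S,e)$; this extra detail is sound and arguably an improvement in rigour.
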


\begin{proof}
Given an input $(\cN,k)$ for \textsc{Max-AllPaths-PD}, we define an instance of \textsc{Max-Network-PD} as $(\cN,p,k)$ where $p(e)=1$ for all reticulation arcs. Since, for all {subsets} $S\subseteq X$, $\textrm{AllPaths-PD}_{\cN}(S)=\textrm{Network-PD}_{\cN,\, p}(S)$, both problems have the same optimal solution. As \textsc{Max-AllPaths-PD} is NP-hard, it follows that \textsc{Max-Network-PD} is also NP-hard and, {by Theorem~\ref{thm:np-hard}}, cannot be approximated in polynomial time with approximation ratio better than $1-\tfrac{1}{e}$ unless ${\rm P}={\rm NP}$.
\end{proof}

For a fixed arc $e$ and a subset $A$ of $X$, let  $\mathcal E_A$ be the event that a feature arising on arc $e$ is inherited by some taxon in set $A$. Then $\gamma(A,e)=\Pr[\mathcal E_A]$. For two subsets $A,B\subseteq X$, by the inclusion-exclusion principle
$$\gamma(A\cup B,e)=\Pr[\mathcal E_A\cup \mathcal E_B]=\Pr[\mathcal E_A]+\Pr[\mathcal E_B]-\Pr[\mathcal E_A\cap\mathcal E_B]\leq \gamma(A,e)+\gamma(B,e)-\gamma(A\cap B,e),$$
where the inequality is because $\mathcal E_{A\cap B}$ is a sub-event of $(\mathcal E_A\cap \mathcal E_B)$. Thus $\gamma$ is submodular. As for AllPaths-PD, we therefore obtain the following immediate corollary.

\begin{corollary}
A greedy algorithm returns a $1-\tfrac{1}{e}$ approximation for \textsc{Max-Network-PD}. Moreover, this approximation ratio cannot be improved unless ${\rm P}\neq {\rm NP}$.
\end{corollary}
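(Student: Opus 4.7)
The plan is to leverage the submodularity of $\gamma(\cdot,e)$ just established, together with the classical greedy approximation theorem for monotone submodular maximisation under a cardinality constraint, and then invoke the previous corollary for the inapproximability side.

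First I would lift submodularity from the individual coefficients $\gamma(\cdot, e)$ to the set function $S\mapsto \textrm{Network-PD}_{\cN,\,p}(S)$. Since
$$\textrm{Network-PD}_{\cN,\,p}(S) = \sum_{e\in\cN} \gamma(S,e)\cdot w(e)$$
is a non-negative linear combination (the arc weights $w(e)$ are non-negative) of submodular functions, it is itself submodular. Next I would verify monotonicity: using the bottom-up recurrences (i)--(iii) defining $\gamma$, a straightforward induction on the reverse topological order of $\cN$ shows that $\gamma(A,e)\le \gamma(B,e)$ whenever $A\subseteq B$, since each recurrence is monotone non-decreasing in its inputs (in case~(ii) note that $x+y-xy = 1-(1-x)(1-y)$ is increasing in both $x,y\in[0,1]$, and in case~(iii) we multiply by the fixed non-negative constant $p(e)$). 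Hence $\textrm{Network-PD}_{\cN,\,p}$ is a non-negative, non-decreasing submodular function of $S$, with $\textrm{Network-PD}_{\cN,\,p}(\emptyset)=0$.

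With these properties in hand, the classical result of Nemhauser, Wolsey and Fisher~\cite{nem81} applies: the greedy algorithm that starts from $\emptyset$ and, at each of the $k$ steps, adds the taxon producing the largest marginal increase in Network-PD, returns a subset $S$ of size $k$ whose value is within a factor $1-\tfrac{1}{e}$ of the optimum. This gives the approximation claim.

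For the inapproximability, I would reuse the reduction from the previous corollary. Given an instance $(\cN,k)$ of \textsc{Max-AllPaths-PD}, set $p(e)=1$ on every reticulation arc; then a straightforward induction on the recurrences for $\gamma$ gives $\gamma(S,e)=\delta(S,e)$ for every arc $e$, so $\textrm{Network-PD}_{\cN,\,p}(S)=\textrm{AllPaths-PD}_{\cN}(S)$ for every $S\subseteq X$. A polynomial-time algorithm approximating \textsc{Max-Network-PD} to a ratio strictly better than $1-\tfrac{1}{e}$ would, via this value-preserving reduction, do the same for \textsc{Max-AllPaths-PD}, contradicting Theorem~\ref{thm:np-hard} unless ${\rm P}={\rm NP}$. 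The main conceptual step is the monotonicity/submodularity lifting from arcs to the whole score; the rest is essentially an invocation of known results.
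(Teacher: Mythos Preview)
Your proposal is correct and follows essentially the same approach as the paper: the paper derives the corollary immediately from the submodularity of $\gamma$ (established in the paragraph preceding the statement) together with the Nemhauser--Wolsey result~\cite{nem81}, and obtains the inapproximability from the preceding corollary via the $p\equiv 1$ reduction to \textsc{Max-AllPaths-PD}. You have simply spelled out the monotonicity check and the lifting from $\gamma$ to the full score more explicitly than the paper does, but the argument is the same.
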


\noindent {\bf {Remark.}} We have already seen that the case of \textsc{Max-Network-PD} in which $p(e)=1$ on all reticulation arcs is equivalent to \textsc{Max-AllPaths-PD} and is NP-hard. It is also the case that if $p(e)=0.5$ on all reticulation arcs (which might correspond to each reticulation being a perfect hybrid), then \textsc{Max-AllPaths-PD} is NP-hard. This can be seen by a reduction from {the NP-complete problem} \textsc{Exact-Cover-By-4-Sets}, {which takes as input a set $E$ with $|E|=4q$, and a collection $C$ of $4$-element subsets of $E$ with no element occurring in more than four subsets. The objective is to decide if there is a subset $C'$ of $C$ which is a partition of $E$. Such a subset is called an {\em exact cover}.} The construction is similar to that shown in Fig.~\ref{fig:hardness}, {where the leaf set is $C'$}, and there is an exact cover if and only if the optimal Network-PD of a subset of $|E|/4$ leaves is $|E|/4$ (each {leaf} contribut{ing} {four} weight-$1$ arcs but with $\gamma$ {averaging} 0.25).

\subsection{Max and Min Network-PD} \label{subsec:maxmin_network_pd}

In the previous section, we have seen that Network-PD is a generalisation of AllPaths-PD obtained by setting the inheritance proportion $p(e)$ to one for each reticulation arc $e$ of {a phylogenetic network} $\cN$. If we restrict the inheritance proportions such that, for a reticulation with incoming arcs $e_1$ and $e_2$ {to}
$$p(e_1)+p(e_2)=1,$$
we obtain the following relationship between the maximum (respectively, minimum) value of Network-PD and MaxWeightTree-PD (respectively, MinWeightTree-PD) as defined in Section~\ref{sec:PDvariants}.

\begin{theorem}
\label{thm:networkPDMaxMin}
Let $\cN$ be a phylogenetic network on $X$, and let $S$ be a fixed subset of $k$ elements of $X$. Let $R$ be the set of reticulation arcs of $\cN$, and let $P$ be the set of all functions mapping $R$ to $[0,1]^{|R|}$ with the restriction that at each reticulation the incoming arcs, $e_1,e_2$ say, have inheritance proportion{s} adding up to $1$, that is $p(e_1)+p(e_2)=1$. Then 
$$\max_{p\in P}\textup{Network-PD}_{\cN, \, p}(S)=\textup{MaxWeightTree-PD}_{\cN}(S),$$
and
$$\min_{p\in P}\textup{Network-PD}_{\cN, \, p}(S)=\textup{MinWeightTree-PD}_{\cN}(S).$$
\end{theorem}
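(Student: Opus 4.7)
My plan is to parameterize $P$ as the hypercube $[0,1]^k$ (where $k$ is the number of reticulations of $\cN$) by choosing, at each reticulation $r$ with incoming arcs $e_1,e_2$, a single variable $t_r = p(e_1) \in [0,1]$ (so $p(e_2) = 1 - t_r$). I would first work out $\textup{Network-PD}_{\cN,p}(S)$ at the vertices of this hypercube (the integral $p$) and then argue that the extrema over the full hypercube are already attained at these vertices.

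For an integral $p$ (i.e.\ $p(e)\in\{0,1\}$), let $\tilde T(p)$ be the arborescence on $X$ obtained from $\cN$ by deleting, at each reticulation, the unique incoming arc with $p$-value $0$. A bottom-up induction on the recursion for $\gamma$ shows that $\gamma(S,e)\in\{0,1\}$ with $\gamma(S,e)=1$ exactly when $e$ lies on a directed path from the root to $S$ in $\tilde T(p)$; consequently $\textup{Network-PD}_{\cN,p}(S)$ equals the weight of the connecting subtree for $S$ in $\tilde T(p)$, which is an element of $\mathsf{T}_S(\cN)$. Conversely, every $T\in\mathsf{T}_S(\cN)$ is realised this way: set $p(e)=1$ for each reticulation arc $e\in T$ (and $0$ for its sibling), and make any integral choice at reticulations not in $T$. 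Since $\tilde T(p)$ then contains $T$ as a subgraph and arborescences have unique root-to-leaf paths, the connecting subtree for $S$ in $\tilde T(p)$ is exactly $T$. This yields $\{\textup{Network-PD}_{\cN,p}(S) : p \text{ integral}\} = \{\textrm{weight}(T) : T\in\mathsf{T}_S(\cN)\}$, so the extrema over integral $p$ coincide with $\textup{MaxWeightTree-PD}_\cN(S)$ and $\textup{MinWeightTree-PD}_\cN(S)$ respectively, supplying one inequality in each direction of the theorem.

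The main obstacle is extending from integral $p$ to all $p\in P$, i.e.\ showing that the extrema of $\textup{Network-PD}(p)$ over the hypercube $[0,1]^k$ are attained at its vertices. This does not follow from multilinearity: the recursion at a tree vertex introduces the cross term $\gamma(f_1)\gamma(f_2)$, and when the same reticulation $r$ is a descendant of both subtrees below that vertex, the same variable $t_r$ appears in both factors, producing a genuine $t_r^2$ term. My strategy is the standard one-variable reduction: with all other $t_{r'}$ held fixed, show that $t_r\mapsto \textup{Network-PD}(t_r,t_{-r})$ attains its maximum and its minimum on $[0,1]$ at $t_r\in\{0,1\}$; iterating then replaces an arbitrary $p\in P$ by an integral $p'$ with the same Network-PD value as the continuous optimum.

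The technical heart, which I expect to be the hardest part of the proof, is this per-reticulation boundary claim. The key observation should be that the constraint $p(e_1)+p(e_2)=1$ forces any cross term arising from $r$'s two incoming arcs to be proportional to $t_r(1-t_r)$; when this propagates upward through a tree vertex whose two subtrees both contain $r$, a careful induction on the $\gamma$ recursion should show that the resulting coefficient of $t_r^2$ carries a definite sign, and that when summed against the non-negative arc weights $w(e)$ this forces both extrema of $t_r\mapsto \textup{Network-PD}$ to the boundary of $[0,1]$. Making this rigorous reticulation-by-reticulation is where the real work of the proof lies.
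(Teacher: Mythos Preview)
Your overall strategy coincides with the paper's: first identify integral $p$ with connecting subtrees (giving one inequality in each direction), and then argue that the optimum over all of $P$ is already attained at an integral $p$. The paper carries out the second step by taking, among all optimal $p_0$, one with the fewest fractional values, and then looking at a \emph{lowest} reticulation $v$ with $p_0\notin\{0,1\}$, i.e.\ one for which every reticulation strictly below $v$ already carries integral $p_0$. This bottom-up choice is the structural idea your plan is missing. It guarantees that below $v$ everything is deterministic, so a feature that reaches $v$ then reaches a fixed subset $S'\subseteq S$ with certainty; the only dependence on $q_0:=p_0((u_1,v))$ is whether the feature reaches $v$ at all. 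One inclusion--exclusion on the two events ``inherited along $(u_1,v)$'' and ``inherited along $(u_2,v)$'' then gives the contribution to $\textup{Network-PD}$ in the closed form
\[
A+Bq_0+C(1-q_0)-Dq_0(1-q_0),\qquad 0\le D\le\min(B,C),
\]
which is exactly quadratic in $q_0$ with nonnegative constants. Your plan to vary an \emph{arbitrary} $t_r$ with the other $t_{r'}$ held at possibly fractional values, and then control ``the coefficient of $t_r^2$'' by induction on the $\gamma$-recursion, is both harder and not obviously quadratic: if, say, $u_1$ is itself a reticulation, the one-variable section $t_r\mapsto\textup{Network-PD}$ can have degree exceeding~$2$. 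Processing reticulations from the bottom up (equivalently, the paper's minimal-fractional argument) is what makes the algebra collapse.

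There is also a real gap in your final claim. In the displayed quadratic the $q_0^2$-coefficient is $+D\ge 0$, so the function is \emph{convex} in $q_0$; convexity forces the \emph{maximum} on $[0,1]$ to an endpoint, which is exactly what the maximisation half needs. But convex functions may take their \emph{minimum} strictly in the interior, so a ``definite sign on the $t_r^2$ coefficient'' does not push the minimum to the boundary. Your assertion that the sign argument ``forces both extrema of $t_r\mapsto\textup{Network-PD}$ to the boundary of $[0,1]$'' is therefore unjustified for the minimisation half. The paper itself only writes out the maximisation case in detail and declares the minimisation ``similar and omitted''; you should not expect your single convexity observation to cover both, and should treat $\min_{p\in P}\textup{Network-PD}_{\cN,p}(S)=\textup{MinWeightTree-PD}_\cN(S)$ as requiring its own argument.
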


\begin{proof}
{We prove the maximisation part of the theorem. The proof of the minimisation part is similar and omitted.}
Let $T$ be a connecting subtree for $S$ in $\cN$ such that $\textrm{PD}_{T}(S)=\textrm{MaxWeightTree-PD}_{\cN}(S)$.
Let $p_{\max}$ be {a function on $R$} defined {as follows}: at each reticulation {$v$} of $\cN$ that is in ${T}$, set $p_{\max}(e)=1$ if {$e$ is directed into $v$ and $e$ is in $T$} and {set} $p_{\max}(e)=0$ if {$e$ is directed into $v$ and $e$ is not in $T$}, and at each reticulation of $\cN$ that is not in ${T}$, set $p_{\max}$ to be $1$ on one incoming arc and $0$ on the other arbitrarily. Then, under $p_{\max}$, all features are inherited along the arcs of ${T}$ and hence 
$$\max_{p\in P}\textrm{Network-PD}_{{\cN},\, p}(S)\geq \textrm{Network-PD}_{p_{\max}}(S)= \textrm{MaxWeightTree-PD}(S).$$

Now consider $p_0\in P$ which maximises ${\rm Network-PD}_{{\cN},\, p_0}(S)$ and among all $p$ which maximise $\textrm{Network-PD}_{{\cN},\, p}(S)$, choose $p_0$ to have a minimum number of arcs $e$ with fractional inheritance proportion, i.e. such that $p_0(e)\not\in\{0,1\}$. Suppose that for all reticulation arcs $e$, {we have} $p_0(e)\in\{0,1\}$. Then at every reticulation one incoming arc has $p_0$ equal to $1$ and the other has $p_0$ equal to $0$. {Therefore} $\textrm{Network-PD}_{{\cN,\, p}}(S)$ is the PD of the minimal connecting tree of $S$ in the tree obtained from $\cN$ by deleting all {reticulation} arcs of $\cN$ with $p_0$ equal to $0$, and so
$$\max_{p\in P}\textrm{Network-PD}_{{\cN},\, p}(S)= \textrm{Network-PD}_{{\cN},\, p_{0}}(S)\leq \textrm{MaxWeightTree-PD}_{\cN}(S).$$
This proves the {maximisation} part of the theorem unless there is some arc with $p_0\not\in\{0,1\}$.

Now suppose, for {a} contradiction, that there is some {reticulation} arc with $p_0\not\in\{0,1\}$. Then there must be a reticulation $v$ with parents $u_1,u_2$ such that (i) $p_0(u_1),p_0(u_2)\not\in\{0,1\}$, (ii) for all reticulations that are descendants of $v$ the incoming arcs have $p_0$ equal to 0 or 1, and (iii) there is a subset $S'\subseteq S$ such that there is a path from $v$ to each element of $S'$ consisting of tree arcs and reticulation arcs with $p_0=1$. (Otherwise we can follow arcs down towards leaves until we find a last reticulation with inheritance from both parents, and if it is not an ancestor of any leaf in $S$, then we can {re}set its incoming arcs weights without affecting Network-PD($S$), {thereby reducing} the number of reticulations with {positive} inheritance from both parents.) 

Now consider the contribution to $\textrm{Network-PD}_{{\cN},\, p_0}(S)$ of each arc $e$ in $\cN$. This is $w(e)$ times the probability that a feature that arises on $e$ is inherited by some member of $S$, where the only randomness comes on reticulations which have $p_0\not\in\{0,1\}$. For {all subsets} $A\subseteq S$, let $\cE_A^e$ be the event that a feature that arises on $e$ is inherited by an element of $A$. Then the contribution of $e$ to $\textrm{Network-PD}_{{\cN},\, p_0}(S)$ is
$$w(e)\Pr[\cE_{S}^e]=w(e)\Pr[\cE_{S\setminus S'}^e]+w(e)\Pr[\overline{\cE_{S\setminus S'}^e}\cap\cE_{S'}^e],$$
where, {as above}, $S'$ is the subset of $S$ consisting of those elements in $S$ that can be reached from $v$ by paths whose reticulation arcs have $p_0=1$. Since all reticulation arcs below $v$ have {$p_0\in \{0, 1\}$}, {it follows that} if $e$ is a descendent arc of $v$, then $\Pr[\overline{\cE_{S\setminus S'}^e}\cap\cE_{S'}^e]=\Pr[\cE_{S'}^e]$ and {it} is either $0$ or $1$ but, importantly, {it} is independent of $p_0((u_1,v))$ {and} $p_0((u_2,v))$.

Let $\cE_{u_1}^e$ be the event that a feature that arises on $e$ is inherited down to the vertex $u_1$, and $\cE_{u_2}^e$ be the event that a feature that arises on $e$ is inherited down to the vertex $u_2$. If $e$ is not a descendant arc of $v$, then writing $q_0=p_0((u_1,v))$, and so $(1-q_0)=p_0((u_2,v))$, {we have}
$$\Pr[\overline{\cE_{S\setminus S'}^e}\cap\cE_{S'}^e]=\Pr[\overline{\cE_{S\setminus S'}^e}\cap\cE_{u_1}^e]q_0+\Pr[\overline{\cE_{S\setminus S'}^e}\cap\cE_{u_2}^e](1-q_0)-\Pr[\overline{\cE_{S\setminus S'}^e}\cap\cE_{u_1}^e\cap \cE_{u_2}^e]q_0(1-q_0).$$
So the full contribution from all arcs not descendants of $v$ is
$$\sum_e w(e) {\bigg(}\Pr[\cE_{S\setminus S'}^e]+\Pr[\overline{\cE_{S\setminus S'}^e}\cap\cE_{u_1}^e]q_0+\Pr[\overline{\cE_{S\setminus S'}^e}\cap\cE_{u_2}^e](1-q_0)-\Pr[\overline{\cE_{S\setminus S'}^e}\cap\cE_{u_1}^e\cap \cE_{u_2}^e]q_0(1-q_0) {\bigg)}.$$
{For convenience, write} $A=\sum_e w(e)\Pr[\cE_{S\setminus S'}^e]$, $B=\sum_e w(e)\Pr[\overline{\cE_{S\setminus S'}^e}\cap\cE_{u_1}^e]$, $C=\sum_e w(e)\Pr[\overline{\cE_{S\setminus S'}^e}\cap\cE_{u_2}^e]$, and $D=\sum_e {w(e)} \Pr[\overline{\cE_{S\setminus S'}^e}\cap\cE_{u_1}^e\cap \cE_{u_2}^e]$. {Note} that $D\leq\min\{B,C\}$. {Without loss of generality, we may assume} that $B\geq C$. {Then} we get the contribution to $\textrm{Network-PD}_{{\cN},\, p_0}(S)$ from all arcs not descendants of $v$ is
$$A +Bq_0 +C(1-q_0)-Dq_0(1-q_0)<A+B.$$
Hence if we amend $p_0$ by setting $p_0((u_1,v))=q_0=1$ {and} $p_0((u_2,v))=0$, then we can only be increasing $\textrm{Network-PD}_{{\cN},\, p_0}(S)$ and, {simultaneously}, reducing the number of arcs $e$ such that $p_0(e)\not\in\{0,1\}$. This contradicts the choice of $p_0$, and hence it must be that there is no arc with $p_0\not\in\{0,1\}$. {This completes the proof} of the theorem.
\end{proof}

Theorem~\ref{thm:networkPDMaxMin} gives us the following immediate corollary.
\begin{corollary}
{Let $\cN$ be a phylogenetic network on $X$, let $k$ be a positive integer, and let $P$ be the set of all inheritance proportion functions whereby, if $p\in P$, and $e_1$ and $e_2$ are the reticulation arcs directed into a reticulation of $\cN$, then $p(e_1)+p(e_2)=1$. Then} 
\[
 \max_{p\in P}\textsc{Max-Network-PD}(\cN, {p}, k) =  \textsc{Max-MaxWeightTree-PD}(\cN,k)
\]
and
\[
 \min_{p\in P}\textsc{Max-Network-PD}(\cN, {p}, k) =  \textsc{Max-MinWeightTree-PD}(\cN,k).
\]
\end{corollary}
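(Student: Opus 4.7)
The plan is to derive both identities from Theorem~\ref{thm:networkPDMaxMin} by interchanging the $p$- and $S$-extrema in the nested expression $\textsc{Max-Network-PD}(\cN,p,k) = \max_{|S|=k}\textrm{Network-PD}_{\cN,p}(S)$.

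For the $\max$ identity the argument is a one-line swap of two maxima: since $\max_{p}\max_{S} = \max_{S}\max_{p}$, Theorem~\ref{thm:networkPDMaxMin} applied to the inner maximisation gives
\[
 \max_{p\in P}\textsc{Max-Network-PD}(\cN,p,k) = \max_{|S|=k}\max_{p\in P}\textrm{Network-PD}_{\cN,p}(S) = \max_{|S|=k}\textrm{MaxWeightTree-PD}_{\cN}(S) = \textsc{Max-MaxWeightTree-PD}(\cN,k).
\]

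For the $\min$ identity the situation is more delicate, as it asserts an equality between a min-max and a max-min. Write $V^* = \textsc{Max-MinWeightTree-PD}(\cN,k)$ and let $S^*$ realise this maximum. The weak direction $\min_{p\in P}\textsc{Max-Network-PD}(\cN,p,k) \geq V^*$ is the standard minimax inequality: for every $p\in P$,
\[
 \textsc{Max-Network-PD}(\cN,p,k) \geq \textrm{Network-PD}_{\cN,p}(S^*) \geq \min_{p'\in P}\textrm{Network-PD}_{\cN,p'}(S^*) = \textrm{MinWeightTree-PD}_{\cN}(S^*) = V^*,
\]
by Theorem~\ref{thm:networkPDMaxMin}.

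The reverse inequality is the substantive step: we must exhibit a specific $p^* \in P$ with $\max_{|S|=k}\textrm{Network-PD}_{\cN,p^*}(S) \leq V^*$. The proof of Theorem~\ref{thm:networkPDMaxMin} shows that minimisers of $\textrm{Network-PD}_{\cN,p}(S)$ can be chosen $\{0,1\}$-valued, in which case $\textrm{Network-PD}_{\cN,p}(S)$ collapses to $\textrm{PD}_{T(p)}(S)$ on the displayed tree $T(p)$ obtained by retaining only the active reticulation arcs. The task therefore reduces to producing a displayed tree $T^*$ of $\cN$ such that $\textrm{PD}_{T^*}(S) \leq V^*$ for every $k$-subset $S$. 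My plan is to start from a $T^*$ containing a minimum-weight connecting subtree for $S^*$ (so that $\textrm{PD}_{T^*}(S^*) = V^*$) and apply a local-swap argument on the remaining choices of reticulation parents: if some $k$-subset $S'$ witnesses $\textrm{PD}_{T^*}(S') > V^*$, exchange the active parent at a reticulation lying on $S'$'s connecting subtree in a way that drives this excess down while preserving the bound at $S^*$, then iterate. Verifying that such an iteration terminates at a $T^*$ meeting the bound uniformly across all $k$-subsets is the main obstacle, since it requires coupling the tree-choice to the combinatorial definition of $V^*$ and cannot be read off from multilinearity of Network-PD in $p$ alone.
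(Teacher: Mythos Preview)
The paper gives no proof of this corollary; it is simply labelled an immediate consequence of Theorem~\ref{thm:networkPDMaxMin}. Your treatment of the maximisation identity is exactly what is intended: two maxima commute, so the first equality is a one-line consequence of the theorem.

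You are right, however, that the minimisation identity is not of this type: it asserts equality between a min--max and a max--min, and only the inequality $\min_{p}\max_{S}\ge\max_{S}\min_{p}$ follows directly from Theorem~\ref{thm:networkPDMaxMin} via the weak minimax inequality, just as you wrote. The paper offers no argument for the reverse inequality, so you have correctly identified a point the authors glossed over.

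Your proposed route to the reverse inequality, though, cannot succeed as stated. You reduce the task to finding a displayed tree $T^*$ (equivalently a $\{0,1\}$-valued $p^*$) with $\max_{|S|=k}\mathrm{PD}_{T^*}(S)\le V^*$. Such a tree need not exist. Take the level-$1$ network with root $\rho$, tree vertices $u_1,u_2$, a reticulation $v$ with parents $u_1,u_2$, and leaves $a,b,c$ attached below $u_1,u_2,v$ respectively; set $w(\rho,u_1)=w(\rho,u_2)=10$, $w(u_1,v)=w(u_2,v)=1$, and all remaining arc weights~$0$, and let $k=2$. Then $V^*=\textsc{Max-MinWeightTree-PD}(\cN,2)=20$, attained by $S^*=\{a,b\}$, but each of the two displayed trees has $\max_{|S|=2}\mathrm{PD}_T(S)=21$ (witnessed by $\{b,c\}$ in one and $\{a,c\}$ in the other). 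Hence no $\{0,1\}$-valued $p$ meets the bound, and your local-swap search over displayed trees cannot terminate at a valid $T^*$. In this example the minimiser of $\max_S\textrm{Network-PD}_{\cN,p}(S)$ is genuinely fractional: writing $q=p((u_1,v))$ one gets $\max_S\textrm{Network-PD}_{q}(S)=\max\{20,\,21-10q,\,11+10q\}$, which equals $20$ precisely for $q\in[0.1,0.9]$. Any correct argument for the reverse inequality must therefore work over the full convex set $P$ rather than restrict to its $\{0,1\}$ extreme points.
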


\section{MaxWeightTree-PD}
\label{sec:max}

Given the importance of MaxWeightTree-PD and MinWeightTree-PD as bounds for Network-PD, we now analyse the complexity of determining the maximum possible PD score over all subsets of taxa of size $k$ under these two variants more in-depth. We begin by considering MaxWeightTree-PD and turn to MinWeightTree-PD in Section~\ref{sec:min}.

Let $\cN$ be a phylogenetic network on $X$. Recall that, for any subset $S\subseteq X$, we have defined MaxWeightTree-PD($S$) to be the maximum  of $\sum_{e \in T}w(e)$ over all connecting subtrees for $S$ in $\cN$.  
We now show that {the} corresponding optimisation problem \textsc{Max-MaxWeightTree-PD} {can} be solved in polynomial time by reducing \textsc{Max-MaxWeightTree-PD} to a minimum-cost flow problem, following the approach of \cite{bor09a}.

\begin{theorem}
Let $\cN$ be a phylogenetic network {on $X$}, and let $k$ be a positive integer. Then \textsc{Max-MaxWeightTree-PD} applied to $\cN$ and $k$ can be solved in polynomial time.
\end{theorem}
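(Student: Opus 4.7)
The plan is to collapse the double maximum in $\textsc{Max-MaxWeightTree-PD}(\cN, k)$ into a single optimisation and then reduce to a polynomial-size minimum-cost flow instance, following the approach of~\cite{bor09a}.

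First I would observe that
\[
\max_{|S|=k}\;\max_{T\in\mathsf{T}_S(\cN)}\sum_{e\in T}w(e) \;=\; \max_{T}\sum_{e\in T}w(e),
\]
where the maximum on the right ranges over all sub-arborescences $T$ of $\cN$ rooted at $\rho$ whose leaf set is a subset of $X$ of cardinality exactly $k$; indeed the leaf set of any such $T$ is an admissible choice of $S$. Hence the task reduces to finding a maximum-weight sub-arborescence of $\cN$ with exactly $k$ leaves from $X$, and the optimal $S$ is simply the leaf set of this arborescence.

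Next, I would build an auxiliary flow network $\cN^*$ from $\cN$ by adding a super-source $s$ with an arc $(s,\rho)$ of capacity $k$ and cost $0$; a super-sink $t$ with arcs $(\ell,t)$ of capacity $1$ and cost $0$ for each $\ell\in X$; and, for every arc $e$ of $\cN$ whose head is not a reticulation, replacing $e$ by a ``bonus'' arc of capacity $1$ and cost $-w(e)$ in parallel with an ``overflow'' arc of capacity $k$ and cost $0$. For each reticulation $v$ of $\cN$ with in-arcs $(u_1,v),(u_2,v)$, I would install a gadget: an auxiliary vertex $\tilde v$, capacity-$1$ bonus arcs $(u_i,\tilde v)$ of cost $-w((u_i,v))$ for $i=1,2$, a capacity-$1$ cost-$0$ arc $(\tilde v,v)$, and parallel capacity-$(k-1)$ cost-$0$ overflow arcs $(u_i,v)$ for $i=1,2$. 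The bottleneck at $(\tilde v,v)$ ensures that at most one of the two reticulation in-arcs collects a bonus, while the overflow arcs accommodate any further units of flow at no reward.

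The crucial step is to argue that an optimal integer $s$-$t$ flow of value $k$ in $\cN^*$ corresponds exactly to a maximum-weight sub-arborescence of $\cN$ with $k$ leaves, with the cost of the flow equalling the negative of the arborescence's weight. The main obstacle will be the bookkeeping at reticulations: one must verify that any min-cost optimal flow will always collect the negative-cost bonus at every arc it uses (so the ``bonus-support'' of the flow is a genuine sub-arborescence and not a strictly larger subnetwork), and that the capacity-$1$ choke at $(\tilde v,v)$ correctly enforces the single-parent rule without the flow profitably ``splitting'' a reticulation's inflow to double up on bonuses. Once this correspondence is proved, the theorem follows: $\cN^*$ has size polynomial in $|X|$ (since $\cN$ is binary and has $O(|X|)$ reticulations), so a min-cost integer flow of value $k$ can be computed in polynomial time by any standard algorithm (e.g.\ Orlin's), and reading off the bonus arcs carrying unit flow in the optimum returns both the optimal sub-arborescence $T$ and its leaf set $S$.
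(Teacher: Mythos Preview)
Your overall strategy---reduce to a minimum-cost integral flow of value $k$, following~\cite{bor09a}---is the same as the paper's. However, your specific flow construction is incorrect, and the gap is precisely at the point you flag as the ``main obstacle'' but do not resolve.

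The paper keeps every arc of $\cN$ at capacity~$1$ and handles surplus flow by adding \emph{extra arcs} of capacity~$k$ and cost~$0$ from the root $\rho$ to every tree vertex. Because the unique out-arc of each reticulation then has capacity~$1$, at most one unit can traverse any reticulation, so at most one of its in-arcs carries flow; the set of $\cN$-arcs with positive flow is therefore automatically an arborescence with exactly $k$ leaves. Your parallel bonus/overflow arcs destroy this invariant: the out-arc of a reticulation now has capacity $k+1$, so both in-arcs can carry flow simultaneously while only one collects the bonus. This lets the bonus-support acquire an internal vertex of $\cN$ as a leaf, and the min-cost flow can then strictly undercut $-\textsc{Max-MaxWeightTree-PD}(\cN,k)$.

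Here is an explicit failure. Take $\cN$ with root $\rho$, tree vertices $a,b,s_1,s_2$, reticulations $r_1,r_2$, and leaves $\ell_1,\ell_1',\ell_2,\ell_2'$, with arcs $\rho\to a,b$; $a\to r_1,r_2$; $b\to r_1,r_2$; $r_i\to s_i$; $s_i\to \ell_i,\ell_i'$. Set $w(\rho,a)=w(\rho,b)=w(b,r_1)=w(b,r_2)=100$ and all other weights~$0$, and take $k=4$. Every connecting subtree for $X$ chooses one parent for each $r_i$, so its weight is at most $300$; hence $\textsc{Max-MaxWeightTree-PD}(\cN,4)=300$. In your network $\cN^*$, route two units along $\rho\to a$ (collecting the $(\rho,a)$ bonus) and send them to $r_1,r_2$ via the overflow arcs $(a,r_1),(a,r_2)$; route the other two units along $\rho\to b$ and send them through the bonus arcs $(b,\tilde r_1),(b,\tilde r_2)$. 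All four $100$-weight bonuses are collected and the flow has cost~$-400$. Thus your min-cost flow does not compute the correct value. In the paper's construction the capacity-$1$ arc $(r_i,s_i)$ forces a single unit through each $r_i$, so only one of $(a,r_i),(b,r_i)$ can carry flow, and the optimum is~$-300$ as required.

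A minor additional point: your size claim that ``$\cN$ is binary and has $O(|X|)$ reticulations'' is false for general phylogenetic networks; that bound holds only for restricted classes such as tree-child. The running time here is polynomial in the size of $\cN$, not in $|X|$.
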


\begin{proof}
Starting with $\cN$, we define a flow network by
\begin{itemize}%
\setlength{\itemsep}{0pt plus 0pt}%
\item setting the root $\rho$ of $\cN$ to be the source,
\item adding additional arcs from $\rho$ to each tree vertex of $\cN$, which we shall call the \textit{extra arcs},
\item appending a new vertex $t$ with an arc from each leaf of $\cN$ directed into $t$, and a {new vertex} $t'$, the sink, with a single arc from $t$ to $t'$,
\item setting the capacity of all arcs inherited from $\cN$, and the arcs from the leaves to $t$, to be 1,
\item setting the capacity of the extra arcs and the final arc $(t,t')$ to be $k$, {and}
\item setting the cost of each arc $e$ inherited from $\cN$ to be the negative of its weight, {that is} $-w(e)$, and the cost of all the additional arcs to be $0$.
\end{itemize}
Observe that, due to the arc $(t, t')$, the maximum flow from the source $\rho$ to the sink $t'$ is $k$ units. Therefore, as all capacities are integral, there is a minimum-cost integral flow of $k$ units that may be found in polynomial time (see, for example,~\cite{ahuja1993network, bor09a}). 

Since there is a cut of the flow network between the leaves in $\cL(\cN)$ and $t$, and each of these arcs has capacity 1, exactly $k$ of these arcs are used in the minimum-cost integral flow. We denote the set of leaves adjacent to these arcs by $S$. Note that if a minimum-cost flow has non-zero flow through any extra arc $(\rho,v)$, then there is a minimum-cost flow that has non-zero flow in the arc $(u,v)$ of $\cN$ directed into $v$, since there is a path from $\rho$ to $v$ via $(u,v)$ which has lower cost than the path {from $\rho$ to $v$ via the extra arc $(\rho, v)$} and is not at capacity due to the extra arcs.  (In the case that the arc $(u,v)$ has weight zero, this path actually has the same cost but, without loss of generality, we can still assume our minimum-cost flow routes through $(u,v)$.) 

Therefore the set of arcs of $\cN$ that have non-zero flow form a connecting subtree $T_S$ for $S$, since (i), by the argument of the previous paragraph, there must be flow from the root to each leaf in $S$ and (ii), each arc {directed out of} a reticulation has capacity~$1$, so {at most} one incoming arc to a reticulation has non-zero flow. The total cost of such a flow is exactly the negative of the sum of the weight of arcs in the {$T_S$}. Moreover, any connecting subtree $T_S$ for $S$ can be realised as a flow of $k$ units by routing as much flow as possible through the arcs constituting $T_S$, and routing extra flow through the extra arcs as necessary. Since we can find the minimum-cost integral $k$-flow in polynomial time, we can therefore find the max-weight embedded connecting subtree for any set of $k$ leaves of $\cN$ in polynomial time.
\end{proof}

\noindent {\bf {Remark.}} The proof of the last theorem can be easily extended to show that the MaxWeightTree-PD can be optimised {for the following problem} using a construction {similar} to that used in~\cite{bor09a} {for the analogous optimisation problem for phylogenetic trees}. 

\noindent\textbf{\textsc{Weighted Average PD on 2 Networks}} \\
\textbf{Input:} {Two phylogenetic networks $\cN_1$ and $\cN_2$ on $X$ with (arc) weight functions $w_1$ and $w_2$, and a positive integer $k$.} \\
\textbf{Objective:} {Determine the maximum value of}
{$$\sum_{e\in T_1} w_1(e) + \sum_{e\in T_2} w_2(e),$$}
{where $T_1\in \mathsf{T}_S(\cN_1)$ and $T_2\in \mathsf{T}_S(\cN_2)$, over all subsets $S\subseteq X$ of cardinality $k$.}

\section{MinWeightTree-PD}
\label{sec:min}

Recall that, for a phylogenetic network {on $X$}, we defined \textsc{Max-MinWeightTree-PD} to be the problem of determining the maximum weight, over {all} subsets of $X$ of cardinality $k$, of the minimum weight connecting subtree of the subset. Our first observation is that for an arbitrary phylogenetic network {$\cN$} on $X$, even computing $\textrm{MinWeightTree-PD}_{{\cN}}(X)$ is computationally hard. To make this more precise consider the following problem:

\noindent\textbf{\textsc{Minimum-Weight $X$-Tree}{$(\cN)$}} \\ 
\textbf{Input:} A phylogenetic network $\cN$ on taxa set $X$. \\
\textbf{Objective:} The value of MinWeightTree-PD$_{\cN}(X)$, i.e. the minimum weight of a connecting subtree for $X$ in $\cN$.

\noindent We will show that this problem is hard by making use of a reduction {from} the well-known NP-complete problem \textsc{Exact Cover By $3$-Sets}~\cite{kar72}:

\noindent\textbf{\textsc{Exact Cover By $3$-Sets}{$(X, C)$}} \\
\textbf{Input:} A set $X$ with $|X|=3q$, and a collection $C$ of $3$-element subsets of $X$ with no element occurring in more than three subsets. \\
\textbf{Objective:} Determine if $C$ contains an exact cover of $X$, that is a subset $C'$ of $C$ which is a partition of $X$?

\begin{theorem}
\label{Thm:MinWeightHard}
The problem \textsc{Minimum-Weight $X$-Tree} is NP-hard.
\end{theorem}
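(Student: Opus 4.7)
The plan is to reduce from \textsc{Exact Cover By $3$-Sets}. Given an instance $(X, C)$ with $|X|=3q$ and each element of $X$ lying in at most three sets of $C$, I would construct in polynomial time a phylogenetic network $\cN$ on leaf set $X$ whose minimum-weight connecting subtree for $X$ encodes the covering choice: each element $x$ is ``assigned'' to one set containing it, each chosen set costs weight $1$, and an exact cover is the unique way to pay only weight $q$.

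Concretely, I would build $\cN$ as follows. Arrange the sets of $C$ as the leaves of a binary tree rooted at the root $\rho$ of $\cN$, and call these leaves the \emph{set vertices} $u_1,\ldots,u_{|C|}$; give every internal arc of this upper tree weight $0$, and assign weight $1$ to the arc directed into each $u_i$. For each element $x \in X$, introduce a small gadget consisting of a chain of binary reticulations whose incoming arcs (from above) come from exactly those $u_i$ with $x \in S_i$, and attach the leaf $x$ below this chain by a weight-$0$ arc. All arcs inside each element gadget receive weight $0$. If $x$ lies in only one set, its gadget collapses to a single tree arc; if $x$ lies in two or three sets, the gadget is a chain of one or two binary reticulations. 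Any remaining non-binary vertex is refined using weight-$0$ arcs so that $\cN$ is a binary phylogenetic network with no parallel arcs.

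The key observation is that any connecting subtree $T \in \mathsf{T}_X(\cN)$ picks, for each element $x \in X$, exactly one $u_i$ with $x \in S_i$ as the ancestor of $x$ in $T$, because in the reticulation chain above $x$ every reticulation has in-degree $1$ in $T$. Hence the weight-$1$ arcs appearing in $T$ are precisely the arcs into those set vertices selected by at least one element; all other arcs of $T$ have weight $0$. Therefore
\[
\textup{MinWeightTree-PD}_{\cN}(X) \;=\; \min\bigl\{\,|C'|\,:\,C'\subseteq C\text{ and }\bigcup_{S\in C'} S = X\,\bigr\}.
\]
Since any $q$ subsets of $C$ cover at most $3q=|X|$ elements, with equality iff they are pairwise disjoint, the right-hand side equals $q$ if and only if $(X,C)$ admits an exact cover, and is at least $q+1$ otherwise. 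A polynomial-time algorithm for \textsc{Minimum-Weight $X$-Tree} would therefore decide \textsc{Exact Cover By $3$-Sets}, yielding NP-hardness.

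The main obstacle I expect is the bookkeeping around binarisation: I need to check that refining both the upper tree and the reticulation chains (always using weight-$0$ arcs) creates no shortcut that a connecting subtree could use to reach a leaf without committing to exactly one of its set vertices, and that the construction remains a valid (acyclic, no parallel arcs) phylogenetic network for each of the multiplicities $1, 2, 3$ in which an element may appear. Once these technicalities are handled, the correspondence between connecting subtrees and coverings is immediate.
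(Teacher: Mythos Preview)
Your proposal is correct and essentially the same as the paper's proof: both reduce from \textsc{Exact Cover By $3$-Sets} by building a phylogenetic network whose upper part is a binary tree leading to one ``set vertex'' per $C_i$, below which each element $x$ hangs off a short chain of reticulations fed by the set vertices containing $x$; the only non-zero weights sit on the arcs into the set vertices, so the weight of a minimum connecting subtree equals the minimum number of sets covering $X$ (scaled by the chosen arc weight). The sole cosmetic difference is that the paper assigns weight~$3$ to each set-vertex arc (yielding the threshold $|X|=3q$) whereas you use weight~$1$ (yielding the threshold $q$).
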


\begin{proof}
Take an instance of {\sc Exact Cover By $3$-Sets}:, i.e. a set $X$ with $|X|=3q$, and a collection $C=\{C_1, C_2, \ldots, C_k\}$ of $3$-element subsets of $X$ with no element occurring in more than three subsets. Similar to a construction in~\cite{kar72}, construct a phylogenetic network on $X$ as follows. Let $D$ be the rooted acyclic digraph with vertex set $C\cup X\cup \rho$ and arc set
$$\{(\rho, C_i): i\in \{1, 2, \ldots, k\}\}\cup \{(C_i, x): x\in X\cap C_i\}.$$
Now weight the arcs of $D$ so that $(\rho, C_i)$ has weight $3$ for all $i$ and all remaining arcs have weight zero. We next construct a phylogenetic network $\cN$ on $X$ from $D$ and its weighting. First, refine the vertices $\rho$ and $C_i$ for all $i$ so that all vertices (except elements of $X$) with in-degree zero or in-degree one have out-degree two. Second, for each element $x$ in $X$, adjoin a new vertex to it via a new arc so that the new vertex has in-degree one (and out-degree zero) and relabel so that the resulting new vertex is now $x$. Third, refine each vertex with in-degree three so that no vertex has in-degree more than two. Lastly, extend the weighting of $D$ by assigning all (new) unweighted arcs weight zero. The resulting phylogenetic network {on $X$} is $\cN$. To illustrate {the construction}, the top half of $D$ and its weighting, and a possible top half of $\cN$ is shown in Fig.~\ref{half}(i) and~(ii), respectively. Clearly, $\cN$ can be constructed in time polynomial in the size of the initial instance of {\sc Exact Cover By $3$-Sets}. Furthermore, it is easily seen that $C$ contains an exact cover of $X$ if and only if $\cN$ has a connecting subtree for $X$ of weight at most $|X|$. Hence computing MinWeightTree-PD$_{\cN}(X)$ is NP-hard. 
\end{proof}

\begin{figure}[htbp]
\centering
\includegraphics[scale=0.25]{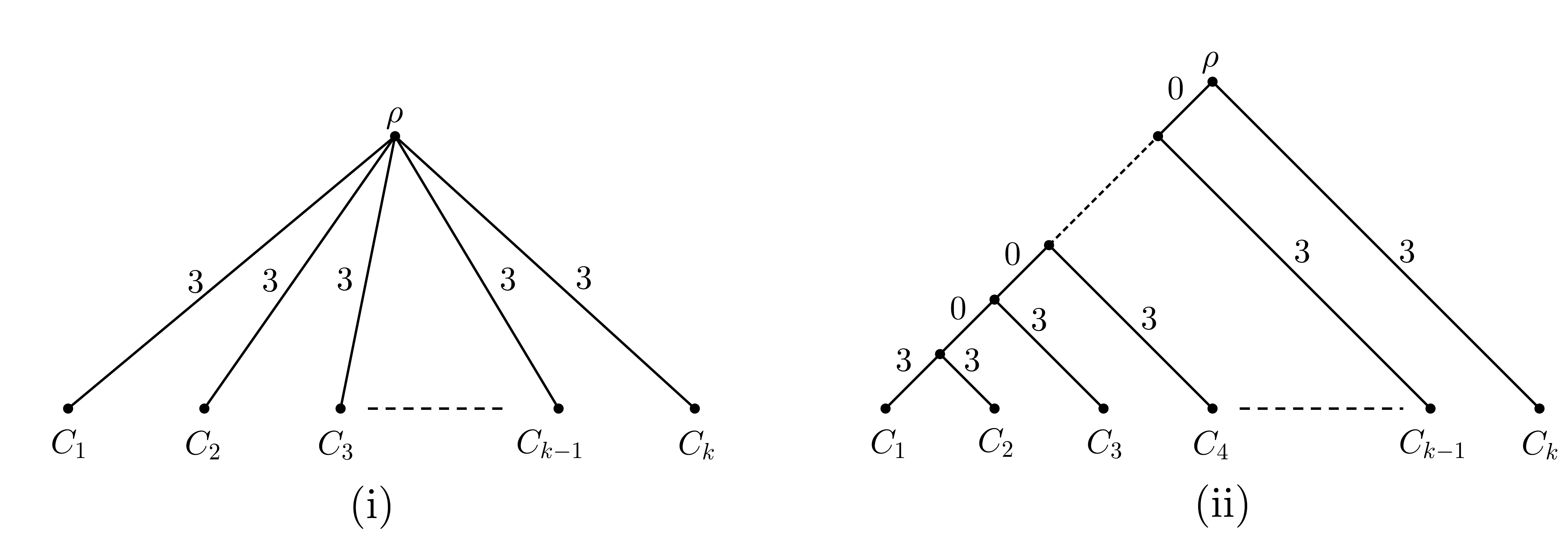}
\caption{(i) The top half of $D$ and its weighting, and (ii) a possible top half of $\cN$.}
\label{half}
\end{figure}

Although we have shown that computing $\textrm{MinWeightTree-PD}_{\cN}(X)$ is hard on a general phylogenetic network {$\cN$}, if we restrict $\cN$ to be a tree-child network, the problem of computing $\textrm{MinWeightTree-PD}_{\cN}(X)$ can be solved in polynomial time.

\begin{theorem}
\label{Thm:MinWeightTC}
Let $\cN$ be a tree-child network on $X$. Then \textsc{Minimum-Weight $X$-Tree} applied to $\cN$ can be solved in polynomial time in the size of $X$.
\end{theorem}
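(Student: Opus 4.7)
The plan is to show that in a tree-child network, every connecting subtree for $X$ has a very rigid structure: all tree arcs are forced into it, and the only freedom is an independent binary choice at each reticulation of which of its two incoming arcs to include. The minimum weight is then obtained by making these choices greedily and locally.

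The key structural claim I would establish is the following: for every connecting subtree $T$ for $X$ in $\cN$, we have $V(T)=V(\cN)$, every tree arc of $\cN$ lies in $T$, and for each reticulation of $\cN$ exactly one of its two incoming arcs lies in $T$. I would prove the vertex inclusion by induction running upward along tree paths. Fix a vertex $v$ of $\cN$ and let $v=u_0\to u_1\to\cdots\to u_m=\ell_v$ be a tree path to a leaf $\ell_v\in X$, which exists by the tree-child property. Since $\ell_v\in T$ and each $u_i$ with $i\ge 1$ is a tree vertex or a leaf (hence has in-degree one in $\cN$), the unique arc into $u_i$ is forced into $T$ whenever $u_i\in T$. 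Iterating from $i=m$ down to $i=0$ shows $v\in T$. The inclusion of every tree arc is then immediate (each tree arc is the unique arc into a non-reticulation vertex), and the binary choice at each reticulation follows because $T$ is a tree and reticulations have in-degree two in $\cN$.

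For the converse, I would verify that any assignment $c$ choosing one incoming arc per reticulation does produce a connecting subtree for $X$. The resulting subgraph has every non-root vertex of in-degree one, is acyclic (inherited from $\cN$), contains $|V(\cN)|-1$ arcs, and retains the out-degree profile of $\cN$ at every non-leaf vertex, so it is a rooted tree with leaf set exactly $X$.

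Given the structural claim, the weight of any connecting subtree $T$ for $X$ satisfies
\[
w(T) \;=\; \sum_{e \text{ a tree arc of } \cN} w(e) \;+\; \sum_{v \text{ a reticulation of } \cN} w(c_T(v)),
\]
where $c_T(v)$ denotes the chosen incoming arc of $v$. The first sum is a constant depending only on $\cN$, and the second decouples into independent contributions, one per reticulation. Hence the minimum is attained by selecting, at each reticulation, whichever of its two incoming arcs has the smaller weight, breaking ties arbitrarily. Since a tree-child network on $X$ has $O(|X|)$ vertices (and so $O(|X|)$ reticulations), this entire computation runs in time linear, and certainly polynomial, in $|X|$. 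The main obstacle is the structural claim in the second paragraph; once that is in hand the algorithm and its complexity analysis are immediate.
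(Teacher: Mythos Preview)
Your proposal is correct and follows essentially the same approach as the paper: both rest on the structural fact that in a tree-child network every connecting subtree for $X$ contains all tree arcs and exactly one incoming arc per reticulation, whence the minimum is obtained by locally choosing the lighter reticulation arc. The paper imports this structural fact by citation (to \cite{Semple2015}), whereas you supply a self-contained argument via tree paths and also explicitly verify the converse direction (that every such choice yields a connecting subtree for $X$), which the paper leaves implicit.
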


\begin{proof}
Let {$T$ be a connecting subtree for $X$ in $\cN$.} Since $\cN$ is tree-child, $T$ contains every tree arc of $\cN$ and $T$ contains, for each reticulation $v$ of $\cN$, precisely one reticulation arc directed into $v$~\cite{Semple2015}. Thus, to find a minimum-weight connecting subtree for $X$ in $\cN$ it suffices to determine for each reticulation $v_i$, a reticulation arc of minimum weight directed into $v_i$. In particular, if $e_i$ is such an arc for all $i$, then the arc set of a minimum-weight connecting subtree for $X$ in $\cN$
is the union of the set of tree arcs in $\cN$ and $\{e_i: v_i \text{ is a reticulation in } \cN\}$. This completes the proof {of the theorem}.
\end{proof}

In contrast to the last theorem, computing MinWeightTree-PD$_{\cN}(S)$ for a \emph{given} subset $S \subset X$ of a phylogenetic network $\cN$ on $X$ is hard even if $\cN$ is a normal network (and so, in particular, if $\cN$ is tree-child).

\begin{theorem}
\label{Thm:MinWeightPDSubset}
Let $\cN$ be a phylogenetic network on $X$ and let $S \subset X$ be a strict subset of {$X$}. Then, computing $\textrm{MinWeightTree-PD}_{\cN}(S)$ is {\rm NP}-hard even if $\cN$ is a normal network.
\end{theorem}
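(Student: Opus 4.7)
The plan is to reduce from \textsc{Exact Cover By 3-Sets} (EC3S), just as in Theorem~\ref{Thm:MinWeightHard}, but with a modified gadget so that the resulting network is normal while the set whose MinWeightTree-PD we query is a strict subset of the leaf set. Let $(Y, C)$ be an instance of EC3S with $|Y|=3q$ and $C=\{C_1,\ldots,C_k\}$. I will build a normal network $\cN$ on leaf set $X = Y \cup D$, where $D$ is a collection of \emph{dummy} leaves introduced only to enforce the tree-child and no-shortcut properties, and take $S = Y \subsetneq X$.

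The construction has two layers. The upper layer is a binary \emph{toll-gate backbone}: a caterpillar $\rho = v_0, v_1, \ldots, v_{k-1}$ with side children $c_1, \ldots, c_k$, where each side arc $(v_{i-1}, c_i)$ has weight~$3$, every spine arc has weight~$0$, and a dummy leaf is appended at $v_{k-1}$ to keep the backbone binary. Any directed path from $\rho$ into the subtree rooted at $c_i$ traverses exactly one weight-$3$ arc, namely the toll arc of $C_i$. The lower layer consists of an \emph{element gadget} for each $y \in Y$ terminating in the leaf $y$: if $y$ is contained in subsets $C_{i_1}, \ldots, C_{i_{k(y)}}$ with $k(y) \in \{1,2,3\}$, then for $k(y)=1$ the gadget is just a tree arc from (a refinement of) $c_{i_1}$ down to $y$; for $k(y)=2$ it is a single reticulation $r^y$ with parents in the $c_{i_1}$- and $c_{i_2}$-subtrees and child $y$; and for $k(y)=3$ it is a chain $r_1^y, t^y, r_2^y, y$ in which $r_1^y$ has parents in the $c_{i_1}$- and $c_{i_2}$-subtrees, $t^y$ is a tree vertex whose children are $r_2^y$ and a dummy leaf, and $r_2^y$ has parents $t^y$ and a vertex in the $c_{i_3}$-subtree. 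Each $c_i$ is then refined into a small binary tree by inserting intermediate tree vertices with dummy-leaf siblings, so that no tree vertex has two reticulation children. All arcs that are not explicitly weighted receive weight~$0$.

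A routine check shows that $\cN$ is normal. Tree-child holds because every reticulation's unique child is either a leaf (for $r^y$ and $r_2^y$) or the tree vertex $t^y$, which itself has a dummy-leaf child; and every tree vertex in the backbone or in any refined $c_i$-subtree has a tree-vertex or leaf child by virtue of the inserted dummy leaves. No reticulation arc is a shortcut either: the subtrees rooted at the distinct $c_i$ are pairwise incomparable in the ancestor order (being attached as side children of a caterpillar), so the two parents of any reticulation lie in incomparable portions of $\cN$ and no alternative directed path exists from the tail of a reticulation arc to its head. Since $D \neq \emptyset$, we have $S = Y \subsetneq X$ as required.

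It remains to link MinWeightTree-PD to exact covers. For any connecting subtree $T$ for $S = Y$ and any $y \in Y$, $T$ contains a unique directed path from $\rho$ to $y$; the incoming-arc choices at the reticulations of $y$'s gadget determine exactly one index $i$ with $y \in C_i$ such that this path enters the backbone through the toll arc $(v_{i-1}, c_i)$. Because every non-toll arc has weight~$0$, the total weight of $T$ equals $3$ times the number of distinct toll arcs contained in $T$, which in turn equals $3$ times the size of a set cover of $Y$ by members of $C$. Since $|C_i| \le 3$, any such cover has at least $q$ members, with equality iff it is an exact cover of $Y$. Hence $\textrm{MinWeightTree-PD}_{\cN}(S) = 3q$ iff $(Y,C)$ admits an exact cover, giving the desired polynomial-time reduction from EC3S and the stated NP-hardness. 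The main obstacle is the $k(y)=3$ gadget --- simultaneously enforcing tree-child and the absence of shortcuts while encoding ``each element lies in up to three subsets'' --- which is why the intermediate tree vertex $t^y$ with its dummy-leaf sibling is essential: without it, the required chain would stack two reticulations and immediately break tree-child, let alone normality.
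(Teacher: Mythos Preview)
Your proof is correct, but it takes a substantially more laborious route than the paper's.

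The paper does not reduce directly from {\sc Exact Cover By $3$-Sets}. Instead it reduces from the already-established {\sc Minimum-Weight $X$-Tree} (Theorem~\ref{Thm:MinWeightHard}) via a one-step normalisation trick: starting from an \emph{arbitrary} phylogenetic network $\cN$ on $X$, subdivide every reticulation arc, attach a fresh pendant leaf of weight~$0$ at each subdivision vertex, and push the original arc weight onto the lower half of the subdivision. The resulting network $\cN'$ on $X'\supset X$ is automatically normal (each reticulation's parents are now tree vertices with a leaf child, and the new reticulation arcs cannot be shortcuts since their tails have only a leaf as alternative descendant), and for $S=X\subsetneq X'$ one has $\textrm{MinWeightTree-PD}_{\cN'}(S)=\textrm{MinWeightTree-PD}_{\cN}(X)$ verbatim. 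That is the whole argument.

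What you do instead is re-engineer the EC3S gadget of Theorem~\ref{Thm:MinWeightHard} so that normality holds from the outset: a caterpillar of toll arcs, per-element gadgets with dummy leaves, and the $k(y)=3$ chain $r_1^y,t^y,r_2^y$ to avoid stacking reticulations. This is valid---the routing analysis and the tree-child/no-shortcut checks go through as you describe---but it forces you to verify normality case by case and to argue the cover correspondence by hand. The paper's approach buys modularity: hardness is established once (Theorem~\ref{Thm:MinWeightHard}) and then transported to the normal setting by a generic transformation that works for \emph{any} network, not just the specific EC3S gadget. Your approach buys self-containment, and it does show concretely that one can encode EC3S inside a normal network without passing through an intermediate non-normal one; but the price is a page of gadgetry where two sentences suffice.
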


\begin{proof}
Take an instance of \textsc{Minimum-Weight $X$-Tree}, i.e. an arbitrary phylogenetic network $\cN$ on $X$. We construct a normal network $\cN'$ on $X' \supset X$ from $\cN$ by subdividing all reticulation arcs and {adjoining} a {new} leaf {via a new arc to} each new vertex. If $(u,v)$ was a reticulation arc in $\cN$ with weight $w(u,v)$, we assign weight $w(u,v)$ to the arc of the subdivision directed into the corresponding reticulation in $\cN'$, and we assign weight zero to the other arc of the subdivision {as well as to the incident} pendant arc leading to {a new leaf} in $X'-X$. {Setting} $S=X \subset X'$, the problem of calculating $\textrm{MinWeightTree-PD}_{\cN'}(S)$ for the normal network $\cN'$ on $X'$ corresponds to the problem of calculating $\textrm{MinWeightTree-PD}_{\cN}(X)$ for the arbitrary network $\cN$ on $X$. However, {by} Theorem~\ref{Thm:MinWeightHard}, the latter {problem} is NP-hard, and hence computing $\textrm{MinWeightTree-PD}_{\cN'}(S)$ for the normal network $\cN'$ on $X'$ and $S \subset X'$ is NP-hard. This completes the proof {of the theorem}.
\end{proof}

We now turn to the original problem of this section and show that it is again an NP-hard problem.

\begin{theorem}\label{Thm:MaxMinWeightPD}
The problem \textsc{Max-MinWeightTree-PD} is NP-hard.
\end{theorem}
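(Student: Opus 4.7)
The plan is to obtain Theorem~\ref{Thm:MaxMinWeightPD} as an essentially immediate consequence of Theorem~\ref{Thm:MinWeightHard} by choosing the parameter $k$ of the optimisation problem to equal $|X|$. The reduction is trivial from a construction standpoint; the only content lies in the observation that when $k=|X|$, the optimisation over size-$k$ subsets collapses to a single evaluation.

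In detail, I would start from an arbitrary instance of \textsc{Minimum-Weight $X$-Tree}, namely a phylogenetic network $\cN$ on $X$, and form the instance $(\cN, k)$ of \textsc{Max-MinWeightTree-PD} with $k:=|X|$. This transformation takes time polynomial (indeed linear) in the size of the input. Since $X$ is the only subset of $X$ of cardinality $|X|$, the objective value of the resulting instance satisfies
\[
\max_{\substack{S\subseteq X \\ |S|=k}} \textup{MinWeightTree-PD}_{\cN}(S) \;=\; \textup{MinWeightTree-PD}_{\cN}(X).
\]
Hence any polynomial-time algorithm that solved \textsc{Max-MinWeightTree-PD} would in particular compute $\textup{MinWeightTree-PD}_{\cN}(X)$ in polynomial time, contradicting Theorem~\ref{Thm:MinWeightHard}.

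There is no real obstacle to overcome: once the parameter choice $k=|X|$ is made, the reduction is immediate and the hardness transfers directly from \textsc{Minimum-Weight $X$-Tree}. The only small thing to verify is that $|X|$ is a valid value of the positive integer $k$ (which it is, since $X$ is non-empty) and that for the constructed network the size of the input instance is polynomial in that of the starting \textsc{Exact Cover By $3$-Sets} instance used inside Theorem~\ref{Thm:MinWeightHard}; both are clear from the previous construction and from the identity above. I would conclude the proof with a single sentence noting the transfer of hardness from Theorem~\ref{Thm:MinWeightHard}.
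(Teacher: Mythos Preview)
Your reduction is correct and in fact more direct than the paper's. You simply pass the network $\cN$ unchanged and set $k=|X|$, so that the only size-$k$ subset is $X$ itself and the optimisation collapses to computing $\textup{MinWeightTree-PD}_{\cN}(X)$, which is NP-hard by Theorem~\ref{Thm:MinWeightHard}. The paper instead enlarges the leaf set: it subdivides a pendant arc of $\cN$ and attaches additional zero-weight leaves to obtain a network $\cN'$ on $X'\supsetneq X$, then asks \textsc{Max-MinWeightTree-PD}$(\cN',k)$ with $k=|X|<|X'|$, and argues that the added leaves never help so an optimal $S$ is $X$. Your approach avoids this padding entirely; the paper's extra step buys the (mild) strengthening that hardness already holds when $k$ is strictly less than the number of leaves, i.e.\ when there genuinely are several candidate subsets, but for the theorem as stated your argument is sufficient and cleaner.
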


\begin{proof}
Take an instance of \textsc{Minimum-Weight $X$-Tree}, i.e. a phylogenetic network $\cN$ on $X$ with $|X|=k$. We now construct a phylogenetic network $\cN'$ on $X' \supset X$ as follows:
\begin{itemize}\setlength{\itemsep}{0pt plus 0pt}
    \item Choose a pendant arc $e$ leading to a leaf, say $x$, of $\cN$, subdivide it (possibly several times), and {adjoin a new leaf via a new arc with weight zero to each new vertex}.
    \item If the weight of $e$ in $\cN$ was $w(e)$, assign weight $w(e)$ to the arc incident with $x$, and assign weight zero to all other arcs of the subdivision.
\end{itemize}
Now, consider the instance $(\cN', k)$ of the problem \textsc{Max-MinWeightTree-PD}, i.e. consider the problem of computing the maximum value of $\textrm{MinWeightTree-PD}_{\cN'}(S)$ over all subsets $S \subseteq X'$ of cardinality $|X|=k$ on $\cN'$. As all elements in $X'{-}X$ {are} attached to $\cN'$ {via} pendant arcs of {weight} zero and all {non-pendant} arcs on a path from the root of some connecting subtree $T$ for $X'$ in $\cN$ to elements in $X'-X$ are also covered by a path from the root of $T$ to $x$, {there is} a subset $S \subseteq X'$ of cardinality $k=|X|$ maximising $\textrm{MinWeightTree-PD}_{\cN'}(S')$ over all subsets $S' \subseteq X'$ with $|S'|=|X|$ {that} does not contain any of the leaves {in $X'-X$}. In other words, we can assume that $S = X$. Thus, the maximum value of MinWeightTree-PD$_{\cN'}(S')$ in $\cN'$ over all subsets $S' \subseteq X'$ with $|S'|=|X|$ coincides with the value of MinWeightTree-PD$_{\cN}(X)$ in $\cN$. {By} Theorem \ref{Thm:MinWeightHard}, the latter {problem} is NP-hard, {and so} we conclude that the problem \textsc{Max-MinWeightTree-PD} is {also} NP-hard.
\end{proof}

\section{Concluding remarks}

Phylogenetic diversity is widely used for quantifying the biodiversity of a set of species based on their evolutionary history and relatedness. Traditionally, PD was calculated on a phylogenetic tree representing the evolution of a set of species. However, it is now commonly accepted that evolution is not always tree-like and that many species' evolutionary history contains reticulation events such as hybridisation or lateral gene transfer. In this paper, we therefore defined four natural variants of the PD score for a subset of taxa whose evolutionary history is represented by a phylogenetic network. Under these variants, we considered the computational complexity of, given a positive integer $k$, determining the maximum PD score over all subsets of taxa of size $k$ when the input is restricted to different classes of phylogenetic networks. More precisely, we showed that determining the maximum PD score over all subsets of taxa of size $k$ under AllPaths-PD is NP-hard even when the inputted phylogenetic network is restricted to be from the class of normal networks. However, the problem is solvable in polynomial time for the class of level-1 networks. The corresponding maximisation problem is also NP-hard under Network-PD and MinWeightTree-PD (again, even when the inputted phylogenetic network is restricted to be from the class of normal networks), but it is solvable in polynomial time under MaxWeightTree-PD. 

An interesting question, however, is to determine the computational complexity of the problems \textsc{Max-Network-PD} and \textsc{Max-MinWeightTree-PD} when the inputted phylogenetic network is restricted to be from the class of level-1 networks. We leave this problem to future research. 

\section{Acknowledgements}
All authors thank Schloss Dagstuhl---Leibniz Centre for Informatics---for hosting the Seminar 19443 \emph{Algorithms and Complexity in Phylogenetics} in October 2019, where this work was initiated and Prof.~Mike Steel for hosting a workshop in Sumner, New Zealand in March 2020. The first and second authors thank the New Zealand Marsden Fund for financial support, and the third author thanks The Ohio State University President’s Postdoctoral Scholars Program. The first author thanks the Erskine Visiting Fellowship Programme for supporting their extended visit to the University of Canterbury, New Zealand in 2020.

\end{document}